\theoremstyle{remark}
\newtheorem{theorem}{Theorem}
\newtheorem{lemma}{Lemma}
\newtheorem{corollary}{Corollary}
\newtheorem{definition}{Definition}
\newtheorem{proposition}{Proposition}
\newtheorem{example}{Example}
\newcommand{\appsection}[1]{\section{\MakeUppercase{#1}}}
\DeclareMathOperator{\Tr}{Tr}
\DeclareMathOperator{\PT}{PT}
\newcommand{\ketbra}[2]{\ket{#1}\!\bra{#2}}
\begin{document}
%%%%%%%%%%%%%%%%%%%%%%
\title{Nonlocal quantum state ensembles and quantum data hiding}
\author{Donghoon Ha}
\affiliation{Department of Applied Mathematics and Institute of Natural Sciences, Kyung Hee University, Yongin 17104, Republic of Korea}
\author{Jeong San Kim}
\email{freddie1@khu.ac.kr}
\affiliation{Department of Applied Mathematics and Institute of Natural Sciences, Kyung Hee University, Yongin 17104, Republic of Korea}
%%%%%%%%%%%%%%%%%%%%%%
%     Abstract       %
%%%%%%%%%%%%%%%%%%%%%%
\begin{abstract}
We consider the discrimination of bipartite quantum states and establish a relation between a nonlocal quantum state ensemble and quantum data-hiding processing.
Using a bound on the optimal local discrimination of bipartite quantum states, we provide a sufficient condition for a bipartite quantum state ensemble to be used to construct a quantum data-hiding scheme.
Our results are illustrated with examples in multidimensional bipartite quantum systems.
\end{abstract}
%%%%%%%%%%%%%%%%%%%%%%
\maketitle
%%%%%%%%%%%%%%%%%%%%%%
%      Section       %
%%%%%%%%%%%%%%%%%%%%%%
\section{Introduction}
%%%%%%%%%%%%%%%%%%%%%%
Quantum nonlocality is a key feature of quantum information processing and quantum communication \cite{horo2009,chid2013,brun2014,uola2020}. 
Quantum entanglement is a well-known quantum nonlocal correlation which can be used as crucial resources for quantum key distribution and quantum teleportation \cite{eker1991,benn1993,chit2019}. 
Quantum steering is another quantum nonlocality which can be used in various quantum information-processing tasks such as quantum cryptography, randomness certification, and quantum channel discrimination \cite{uola2020}. 
Recently, it was shown that the nonlocality of quantum coherence can provide
advantages for coherence steerability \cite{stre2017,mond2017,hu2018}.
%%%%%%%%%%%%%%%%%%%%%%

%%%%%%%%%%%%%%%%%%%%%%
A quantum nonlocal phenomenon can also arise in discriminating multipartite quantum states.
In general, orthogonal quantum states can be discriminated with certainty,
whereas nonorthogonal quantum states cannot \cite{chef2000,berg2007,barn20091,bae2015}.
However, some multipartite orthogonal quantum states exist that cannot be perfectly discriminated by only \emph{local operations and classical communication} (LOCC) \cite{benn19991,ghos2001,walg2002}.
Moreover, some multipartite nonorthogonal quantum states exist that cannot be optimally discriminated by only LOCC \cite{pere1991,chit2013,chit20142}.
%%%%%%%%%%%%%%%%%%%%%%

%%%%%%%%%%%%%%%%%%%%%%
The nonlocality of multipartite quantum state discrimination can be useful in quantum communications \cite{terh20011,divi2002}.
Quantum data hiding is one of the typical applications that use the nonlocality of multipartite quantum state discrimination \cite{terh20011,divi2002}. 
The first quantum data-hiding scheme was shown through the nonlocal quantum state ensemble consisting of Bell states \cite{terh20011}; a classical bit is accessible globally but not locally, and the bit is perfectly hidden asymptotically. 
Several quantum data-hiding schemes were also introduced using sets of quantum states \cite{egge2002,mori2013,pian2014,peng2021,lami2021}.
However, it is still natural to ask which kind of nonlocal quantum state ensemble can be used in quantum data-hiding schemes.
%%%%%%%%%%%%%%%%%%%%%%

%%%%%%%%%%%%%%%%%%%%%%
Here, we consider the discrimination of bipartite quantum states and establish a relation between the nonlocal quantum state ensemble and quantum data-hiding processing.
We first introduce a quantity realizing the optimal distinguishability of \emph{partially transposed} (PT) states \cite{ha20221}.
By using this quantity, we show that a bipartite orthogonal quantum state ensemble with small PT-state distinguishability can be turned into a globally distinguishable ensemble with arbitrarily small local distinguishability.
Based on this relation, we provide a sufficient condition for a bipartite quantum state ensemble to be used to construct a quantum data-hiding scheme.
Using our quantum data-hiding scheme, classical data of arbitrary size can be perfectly hidden, asymptotically, if the users are restricted to LOCC measurements, whereas the data can be completely recovered if they can collaborate to use global measurements.
Our results are illustrated by examples in multidimensional bipartite quantum systems.
%%%%%%%%%%%%%%%%%%%%%%

%%%%%%%%%%%%%%%%%%%%%%
This paper is organized as follows.
In Sec.~\ref{sec:blmd}, we first review the definitions and properties of minimum-error discrimination in bipartite quantum systems.
We further recall a bound on local minimum-error discrimination and provide some properties which are useful in quantum data hiding. 
In Sec.~\ref{sec:qdh}, we provide a sufficient condition for a bipartite quantum state ensemble to be used to construct a quantum data-hiding scheme.
In Sec.~\ref{sec:ex}, our results are illustrated with examples in multidimensional bipartite quantum systems.
In Sec.~\ref{sec:dsc}, we summarize our results and provide a direction for future works.
%%%%%%%%%%%%%%%%%%%%%%

%%%%%%%%%%%%%%%%%%%%%%
%      Section       %
%%%%%%%%%%%%%%%%%%%%%%
\section{Bounds of local minimum-error discrimination}\label{sec:blmd}
%%%%%%%%%%%%%%%%%%%%%%
For bipartite quantum systems held by two parties, Alice and Bob, a quantum state is described by a density operator $\rho$, that is, a positive-semidefinite operator $\rho\succeq0$ with unit trace $\Tr\rho=1$, acting on a bipartite Hilbert space $\mathcal{H}=\mathbb{C}^{d_{\rm A}}\otimes\mathbb{C}^{d_{\rm B}}$.
A measurement is represented by a positive operator-valued measure $\{M_{i}\}_{i}$, that is, a set of positive-semidefinite operators $M_{i}\succeq0$ acting on $\mathcal{H}$ satisfying 
the completeness relation $\sum_{i}M_{i}=\mathbbm{1}$, where $\mathbbm{1}$ is the identity operator acting on $\mathcal{H}$. 
When a measurement $\{M_{i}\}_{i}$ is performed on a quantum  state $\rho$, the probability of obtaining the measurement outcome with respect to $M_{j}$ is $\Tr(\rho M_{j})$.
%%%%%%%%%%%%%%%%%%%%%%

%%%%%%%%%%%%%%%%%%%%%%
In this section, we consider the situation of discriminating $n$ bipartite quantum states $\rho_{0},\ldots,\rho_{n-1}$, where the state $\rho_{i}$ is prepared with the probability $\eta_{i}\in[0,1]$ for $i=0,\ldots,n-1$. We denote this situation as the \emph{quantum state discrimination of the ensemble},
\begin{equation}\label{eq:ens}
\mathcal{E}=\{\eta_{i},\rho_{i}\}_{i=0}^{n-1}.
\end{equation}
In discriminating the states from the ensemble $\mathcal{E}$ using a measurement $\mathcal{M}=\{M_{i}\}_{i=0}^{n-1}$, 
we decide that the prepared state is $\rho_{i}$
if the measurement outcome is from $M_{i}$.
In this case, the average probability of correctly guessing the prepared state is
\begin{equation}\label{eq:pcg}
\sum_{i=0}^{n-1}\eta_{i}\Tr(\rho_{i}M_{i}).
\end{equation}
The \emph{minimum-error discrimination} of $\mathcal{E}$ is to achieve the optimal success probability
\begin{equation}\label{eq:pge}
p_{\rm G}(\mathcal{E})=\max_{\mathcal{M}}\sum_{i=0}^{n-1}\eta_{i}\Tr(\rho_{i}M_{i}),
\end{equation}
where the maximum is taken over all possible measurements \cite{hels1969,bae2013}.
%%%%%%%%%%%%%%%%%%%%%%

%%%%%%%%%%%%%%%%%%%%%%
A measurement is called a \emph{LOCC measurement} if it can be realized by LOCC.
When the available measurements are limited to LOCC measurements, we denote the maximum success probability by
\begin{equation}\label{eq:ple}
p_{\rm L}(\mathcal{E})=\max_{\substack{\rm LOCC\\ \mathcal{M}}}\sum_{i=0}^{n-1}\eta_{i}\Tr(\rho_{i}M_{i}).
\end{equation}
From the definitions of $p_{\rm G}(\mathcal{E})$ and $p_{\rm L}(\mathcal{E})$ in Eqs.~\eqref{eq:pge} and \eqref{eq:ple}, respectively, $p_{\rm G}(\mathcal{E})$ is obviously an upper bound of $p_{\rm L}(\mathcal{E})$; therefore,
\begin{equation}\label{eq:ldp}
p_{\rm L}(\mathcal{E})\leqslant p_{\rm G}(\mathcal{E}).
\end{equation}
Moreover, guessing the prepared state as the state with largest prior probability is obviously a LOCC measurement; thus, we have
\begin{equation}\label{eq:inq}
\max\{\eta_{0},\ldots,\eta_{n-1}\}\leqslant p_{\rm L}(\mathcal{E}).
\end{equation}
%%%%%%%%%%%%%%%%%%%%%%

%%%%%%%%%%%%%%%%%%%%%%
Now, we consider an upper bound of $p_{\rm L}(\mathcal{E})$ in Eq.~\eqref{eq:ple} based on \emph{partial transposition}. 
We further provide some properties of the upper bound which are useful in quantum data hiding.
%%%%%%%%%%%%%%%%%%%%%%

%%%%%%%%%%%%%%%%%%%%%%
\begin{definition}
A Hermitian operator $E$ on a bipartite Hilbert space $\mathcal{H}$ is called \emph{positive partial transpose} (PPT) if 
$E^{\PT}\succeq0$,
where the superscript $\PT$ indicates
the partial transposition \cite{pere1996,pptp}. Otherwise, $E$ is called \emph{negative partial transpose} (NPT).
\end{definition}
%%%%%%%%%%%%%%%%%%%%%%

%%%%%%%%%%%%%%%%%%%%%%
For an ensemble $\mathcal{E}=\{\eta_{i},\rho_{i}\}_{i=0}^{n-1}$,
let us consider 
\begin{equation}\label{eq:qge}
q_{\rm G}(\mathcal{E})=\max_{\mathcal{M}}\sum_{i=0}^{n-1}\eta_{i}\Tr(\rho_{i}^{\PT}M_{i}),
\end{equation}
where the maximum is taken over all possible measurements. 
This quantity is known as an upper bound of $p_{\rm L}(\mathcal{E})$; that is,
\begin{equation}\label{eq:upb}
p_{\rm L}(\mathcal{E})\leqslant q_{\rm G}(\mathcal{E})
\end{equation}
for any bipartite quantum state ensemble $\mathcal{E}$ \cite{ha20221}.
The following proposition provides a necessary and sufficient condition of a measurement realizing $q_{\rm G}(\mathcal{E})$ \cite{ha20221}.
%%%%%%%%%%%%%%%%%%%%%%

%%%%%%%%%%%%%%%%%%%%%%
%    Proposition     %
%%%%%%%%%%%%%%%%%%%%%%
\begin{proposition}\label{prop:qgm}
For a bipartite quantum state ensemble $\mathcal{E}=\{\eta_{i},\rho_{i}\}_{i=0}^{n-1}$
and a measurement $\mathcal{M}=\{M_{i}\}_{i=0}^{n-1}$,
$\mathcal{M}$ realizes $q_{\rm G}(\mathcal{E})$ if and only if 
\begin{equation}\label{eq:nsqg}
\sum_{j=0}^{n-1}\eta_{j}\rho_{j}^{\PT}M_{j}-\eta_{i}\rho_{i}^{\PT}\succeq 0
\end{equation}
for each $i=0,\ldots,n-1$.
\end{proposition}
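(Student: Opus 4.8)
The plan is to recognize that $q_{\rm G}(\mathcal{E})$ is the optimal value of a semidefinite program over the convex set of measurements, and that the stated condition is exactly its optimality (Karush--Kuhn--Tucker) condition. Writing $A_i=\eta_i\rho_i^{\PT}$, the objective $\sum_i\Tr(A_iM_i)$ is linear in the $M_i$ and has the same form as ordinary minimum-error discrimination, the only difference being that the weight operators $A_i$ are Hermitian but need not be positive semidefinite. Since $\Tr(\rho_i^{\PT})=\Tr\rho_i=1$ the trace structure is unchanged, and the derivation of the optimality condition never uses positivity of the weights, only their Hermiticity together with $M_i\succeq0$ and $\sum_iM_i=\mathbbm{1}$. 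Throughout I set $K=\sum_{j}\eta_j\rho_j^{\PT}M_j$, so that the proposition asserts that $\mathcal{M}$ is optimal if and only if $K-\eta_i\rho_i^{\PT}\succeq0$ for all $i$.

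For sufficiency, suppose the condition holds. Each $K-\eta_i\rho_i^{\PT}$ is Hermitian and each $\eta_i\rho_i^{\PT}$ is Hermitian, so $K$ is Hermitian and satisfies $K\succeq\eta_i\rho_i^{\PT}$ for every $i$. Then for an arbitrary measurement $\{M_i'\}$,
\begin{equation}
\sum_i\eta_i\Tr(\rho_i^{\PT}M_i')\leqslant\sum_i\Tr(KM_i')=\Tr\Big(K\sum_iM_i'\Big)=\Tr K,
\end{equation}
where the inequality uses $\Tr((K-\eta_i\rho_i^{\PT})M_i')\geqslant0$ for products of positive-semidefinite operators, and the last equality uses the completeness relation. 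On the other hand the measurement $\{M_i\}$ itself attains $\sum_i\eta_i\Tr(\rho_i^{\PT}M_i)=\Tr K$. Hence $\{M_i\}$ meets the upper bound $\Tr K$ valid for all measurements, so it realizes $q_{\rm G}(\mathcal{E})$.

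For necessity I would invoke strong duality. The dual program is $\min\{\Tr K:K=K^{\dagger},\,K\succeq\eta_i\rho_i^{\PT}\ \forall i\}$, and weak duality is the inequality just displayed. Slater's condition holds on the primal side, since $M_i=\mathbbm{1}/n\succ0$ is strictly feasible, so strong duality holds and both optima are attained. Let $\{M_i\}$ be primal-optimal and let $K^{\star}$ be dual-optimal with $\Tr K^{\star}=q_{\rm G}(\mathcal{E})$. Complementary slackness gives $\Tr((K^{\star}-\eta_i\rho_i^{\PT})M_i)=0$ for each $i$; since both factors are positive semidefinite this forces $(K^{\star}-\eta_i\rho_i^{\PT})M_i=0$, i.e.\ $K^{\star}M_i=\eta_i\rho_i^{\PT}M_i$. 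Summing over $i$ and using $\sum_iM_i=\mathbbm{1}$ yields $K^{\star}=\sum_j\eta_j\rho_j^{\PT}M_j=K$, whence $K-\eta_i\rho_i^{\PT}=K^{\star}-\eta_i\rho_i^{\PT}\succeq0$ for all $i$, which is the claimed condition.

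The main obstacle is the necessity direction, specifically the passage from the scalar identity $\Tr((K^{\star}-\eta_i\rho_i^{\PT})M_i)=0$ to the operator identity $(K^{\star}-\eta_i\rho_i^{\PT})M_i=0$; this rests on the elementary fact that $\Tr(BC)=0$ with $B,C\succeq0$ implies $BC=0$. An equivalent self-contained route avoids duality and argues variationally: perturbing an optimal $\{M_i\}$ by the family $\{e^{i\epsilon H}M_ie^{-i\epsilon H}\}$ and setting the first-order change to zero for every Hermitian $H$ shows that $K$ is Hermitian, after which a perturbation that shifts measurement weight toward outcome $i$ along a direction $\ket{\psi}$ with $\langle\psi|(K-\eta_i\rho_i^{\PT})|\psi\rangle<0$ would strictly increase the objective, contradicting optimality. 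I expect the bookkeeping of keeping every perturbed operator positive semidefinite while preserving the completeness relation to be the fiddly part of that alternative.
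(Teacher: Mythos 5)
Your proof is correct. Note that the paper does not actually prove Proposition~\ref{prop:qgm}: it imports the statement from Ref.~[ha20221], so there is no in-paper argument to compare against. Your route is the standard one for results of this type, namely the Yuen--Kennedy--Lax/Holevo optimality conditions derived via semidefinite-programming duality, and you correctly identify the only point that needs care: the weight operators $A_i=\eta_i\rho_i^{\PT}$ are Hermitian with unit trace but not necessarily positive semidefinite, and nothing in the derivation uses their positivity. The sufficiency direction (any dual-feasible $K$ upper-bounds every measurement's value, and condition~\eqref{eq:nsqg} makes $K=\sum_j\eta_j\rho_j^{\PT}M_j$ dual feasible with the bound saturated by $\mathcal{M}$) is airtight. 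For necessity, your chain --- Slater's condition via $M_i=\mathbbm{1}/n\succ0$ giving zero duality gap and dual attainment (primal attainment following from compactness of the POVM set), complementary slackness $\Tr[(K^{\star}-A_i)M_i]=0$, the elementary fact that $\Tr(BC)=0$ with $B,C\succeq0$ forces $BC=0$, and then $K^{\star}=K^{\star}\sum_iM_i=\sum_iA_iM_i=K$ --- is complete and closes the argument. The variational alternative you sketch at the end is not needed and, as you note, would be fussier to make rigorous; the duality proof as written stands on its own.
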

%%%%%%%%%%%%%%%%%%%%%%

%%%%%%%%%%%%%%%%%%%%%%
Although Proposition~\ref{prop:qgm} provides a necessary and sufficient condition of the optimal measurement realizing $q_{\rm G}(\mathcal{E})$,
the definition of $q_{\rm G}(\mathcal{E})$ in Eq.~\eqref{eq:qge} is based on the optimization over all possible measurements. However, the following theorem provides the closed-form expression of $q_{\rm G}(\mathcal{E})$ if the ensemble consists of two states.
The proof of Theorem~\ref{thm:tsqg} is given in Appendix~\ref{app:thm12}.
%%%%%%%%%%%%%%%%%%%%%%

%%%%%%%%%%%%%%%%%%%%%%
%      Theorem       %
%%%%%%%%%%%%%%%%%%%%%%
\begin{theorem}\label{thm:tsqg}
For a bipartite quantum state ensemble $\mathcal{E}=\{\eta_{i},\rho_{i}\}_{i=0}^{1}$,
\begin{equation}\label{eq:qgts}
q_{\rm G}(\mathcal{E})=\frac{1}{2}+\frac{1}{2}\Tr|\eta_{0}\rho_{0}^{\PT}-\eta_{1}\rho_{1}^{\PT}|,
\end{equation}
where $|E|$ is the positive square root of $E^{\dagger}E$, that is, 
\begin{equation}\label{eq:abe}
|E|=\sqrt{E^{\dagger}E}.
\end{equation}
\end{theorem}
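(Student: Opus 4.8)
The plan is to reduce the two-outcome optimization defining $q_{\rm G}(\mathcal{E})$ to the maximization of a single linear functional over the operator interval $0\preceq M_0\preceq\mathbbm{1}$, exactly as in the standard Helstrom two-state bound, but now carried out for the Hermitian (though not necessarily positive semidefinite) operators $\rho_0^{\PT}$ and $\rho_1^{\PT}$. First I would note that a two-outcome POVM $\{M_0,M_1\}$ is equivalent to a single operator $M_0$ with $0\preceq M_0\preceq\mathbbm{1}$ via $M_1=\mathbbm{1}-M_0$, and use this to eliminate $M_1$ and rewrite the objective in Eq.~\eqref{eq:qge} as
\begin{equation}
\eta_0\Tr(\rho_0^{\PT}M_0)+\eta_1\Tr\big(\rho_1^{\PT}(\mathbbm{1}-M_0)\big)=\eta_1+\Tr(\Lambda M_0),
\end{equation}
where $\Lambda=\eta_0\rho_0^{\PT}-\eta_1\rho_1^{\PT}$ and I have used that partial transposition preserves the trace, so $\Tr(\rho_1^{\PT})=\Tr\rho_1=1$. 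Thus $q_{\rm G}(\mathcal{E})=\eta_1+\max_{0\preceq M_0\preceq\mathbbm{1}}\Tr(\Lambda M_0)$.

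The central step is to evaluate this maximum. Since $\Lambda$ is Hermitian, I would take its spectral decomposition and split it into positive and negative parts $\Lambda=\Lambda_+-\Lambda_-$ with $\Lambda_\pm\succeq0$ having orthogonal supports. For the upper bound, any $M_0$ with $0\preceq M_0\preceq\mathbbm{1}$ satisfies $\Tr(\Lambda M_0)=\Tr(\Lambda_+M_0)-\Tr(\Lambda_-M_0)\leqslant\Tr(\Lambda_+M_0)\leqslant\Tr\Lambda_+$, where the two inequalities follow from $\Lambda_-,M_0\succeq0$ and from $\mathbbm{1}-M_0,\Lambda_+\succeq0$, respectively. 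Achievability follows by choosing $M_0$ to be the projector onto the support of $\Lambda_+$, which gives $\Tr(\Lambda M_0)=\Tr\Lambda_+$. Hence $\max_{0\preceq M_0\preceq\mathbbm{1}}\Tr(\Lambda M_0)=\Tr\Lambda_+$. (As a consistency check, this optimal $M_0$ together with $M_1=\mathbbm{1}-M_0$ can be verified to satisfy the optimality condition~\eqref{eq:nsqg} of Proposition~\ref{prop:qgm}.)

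Finally I would convert $\Tr\Lambda_+$ into the trace-norm expression of Eq.~\eqref{eq:qgts}. Using $\Tr|\Lambda|=\Tr\Lambda_++\Tr\Lambda_-$ together with $\Tr\Lambda=\Tr\Lambda_+-\Tr\Lambda_-=\eta_0-\eta_1$ (again because partial transposition is trace-preserving and $\eta_0+\eta_1=1$), solving these two relations yields $\Tr\Lambda_+=\tfrac12(\Tr|\Lambda|+\eta_0-\eta_1)$. Substituting back,
\begin{equation}
q_{\rm G}(\mathcal{E})=\eta_1+\tfrac12\big(\Tr|\Lambda|+\eta_0-\eta_1\big)=\tfrac12+\tfrac12\Tr|\Lambda|,
\end{equation}
which is precisely Eq.~\eqref{eq:qgts}, since $|\Lambda|=\sqrt{\Lambda^\dagger\Lambda}$ for the Hermitian $\Lambda$ as in Eq.~\eqref{eq:abe}. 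I do not anticipate a serious obstacle here, as the argument is the operator analogue of the scalar Helstrom computation; the one point requiring care is that the $\rho_i^{\PT}$ need not be positive semidefinite, so the entire derivation must rely solely on their Hermiticity and unit trace rather than on any positivity of the PT states, with the optimal $M_0$ dictated by the spectrum of $\Lambda$ instead of the spectra of the individual states.
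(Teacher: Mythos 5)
Your proof is correct, and it arrives at Eq.~\eqref{eq:qgts} through the same underlying object as the paper --- the splitting of $\Lambda=\eta_{0}\rho_{0}^{\PT}-\eta_{1}\rho_{1}^{\PT}$ into positive and negative parts, with the optimal measurement given by the spectral projectors of $\Lambda$ --- but it certifies optimality by a different mechanism. The paper exhibits the projectors $M_{0},M_{1}$ onto the nonnegative and negative eigenspaces of $\Lambda$ and then invokes Proposition~\ref{prop:qgm}, the imported necessary-and-sufficient optimality condition~\eqref{eq:nsqg}, to conclude that this measurement attains $q_{\rm G}(\mathcal{E})$; the value is then extracted via the identities $E^{(\pm)}=\tfrac12(|E|\pm E)$ and $|E|=E^{(+)}+E^{(-)}$. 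You instead eliminate $M_{1}$, reduce the problem to $\max_{0\preceq M_{0}\preceq\mathbbm{1}}\Tr(\Lambda M_{0})$, and prove directly that this maximum equals $\Tr\Lambda_{+}$ via the two-sided estimate $\Tr(\Lambda M_{0})\leqslant\Tr(\Lambda_{+}M_{0})\leqslant\Tr\Lambda_{+}$ together with achievability on the support of $\Lambda_{+}$. This makes your argument self-contained and elementary --- Proposition~\ref{prop:qgm} enters only as a consistency check --- whereas the paper's route is shorter given that the optimality condition is already in hand and is reused elsewhere (e.g.\ in Example~\ref{ex:gex} and in the proof of Theorem~\ref{thm:lqge}). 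Your closing algebra, $\Tr\Lambda_{+}=\tfrac12(\Tr|\Lambda|+\Tr\Lambda)$ with $\Tr\Lambda=\eta_{0}-\eta_{1}$ by trace preservation of partial transposition, is equivalent to the paper's final step, and your observation that only Hermiticity and unit trace of the $\rho_{i}^{\PT}$ are used is exactly the point that makes the Helstrom computation carry over to PT operators.
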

%%%%%%%%%%%%%%%%%%%%%%

%%%%%%%%%%%%%%%%%%%%%%
For an ensemble $\mathcal{E}$ with an arbitrary number of states, the following theorem provides an upper bound of $q_{\rm G}(\mathcal{E})$.
The proof of Theorem~\ref{thm:ubqg} is given in Appendix~\ref{app:thm12}.
%%%%%%%%%%%%%%%%%%%%%%

%%%%%%%%%%%%%%%%%%%%%%
%      Theorem       %
%%%%%%%%%%%%%%%%%%%%%%
\begin{theorem}\label{thm:ubqg}
For a bipartite quantum state ensemble $\mathcal{E}=\{\eta_{i},\rho_{i}\}_{i=0}^{n-1}$ and a Hermitian operator $H$, if 
\begin{equation}\label{eq:ubc}
H-\eta_{i}\rho_{i}^{\PT}\succeq0
\end{equation}
for each $i=0,\ldots,n-1$, then the trace of $H$ is an upper bound of $q_{\rm G}(\mathcal{E})$, that is,
\begin{equation}\label{eq:trh}
\Tr H \geqslant q_{\rm G}(\mathcal{E}).
\end{equation}
\end{theorem}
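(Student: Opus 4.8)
The plan is to prove the bound by a direct weak-duality argument, treating the Hermitian operator $H$ as a dual feasible point and estimating the objective $\sum_{i}\eta_{i}\Tr(\rho_{i}^{\PT}M_{i})$ from above for every admissible measurement $\mathcal{M}$ at once.

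First I would fix an arbitrary measurement $\mathcal{M}=\{M_{i}\}_{i=0}^{n-1}$ entering the optimization \eqref{eq:qge}, so that each $M_{i}\succeq0$ and $\sum_{i}M_{i}=\mathbbm{1}$. The key elementary fact I would invoke is that the trace of the product of two positive-semidefinite operators is nonnegative. Applying this to the hypothesis $H-\eta_{i}\rho_{i}^{\PT}\succeq0$ from \eqref{eq:ubc} and to $M_{i}\succeq0$ gives $\Tr[(H-\eta_{i}\rho_{i}^{\PT})M_{i}]\geqslant0$ for each $i$, that is, $\eta_{i}\Tr(\rho_{i}^{\PT}M_{i})\leqslant\Tr(HM_{i})$.

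Next I would sum these inequalities over $i$ and use the completeness relation:
\begin{equation}
\sum_{i=0}^{n-1}\eta_{i}\Tr(\rho_{i}^{\PT}M_{i})\leqslant\sum_{i=0}^{n-1}\Tr(HM_{i})=\Tr\Bigl(H\sum_{i=0}^{n-1}M_{i}\Bigr)=\Tr H.
\end{equation}
Since $\mathcal{M}$ was arbitrary, taking the maximum of the left-hand side over all measurements yields $q_{\rm G}(\mathcal{E})\leqslant\Tr H$, which is precisely \eqref{eq:trh}.

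There is no serious obstacle here: the argument is simply the weak-duality half of the semidefinite program defining $q_{\rm G}(\mathcal{E})$, and the only nontrivial ingredient is the positivity of $\Tr(AB)$ for $A,B\succeq0$, which follows by writing $A=\sqrt{A}\sqrt{A}$, using cyclicity of the trace, and noting that $\sqrt{A}\,B\,\sqrt{A}\succeq0$. The one point worth flagging is that $H$ is assumed only Hermitian rather than positive; this causes no difficulty, since positivity of $H$ itself is never used—only the operator inequalities \eqref{eq:ubc} and the positivity of the $M_{i}$ enter the estimate.
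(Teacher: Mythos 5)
Your proposal is correct and follows essentially the same weak-duality argument as the paper: both reduce the claim to $\Tr[(H-\eta_{i}\rho_{i}^{\PT})M_{i}]\geqslant0$ for each $i$ and the completeness relation $\sum_{i}M_{i}=\mathbbm{1}$. The only cosmetic difference is that the paper applies the estimate directly to the optimal measurement, whereas you bound an arbitrary measurement and then take the maximum.
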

%%%%%%%%%%%%%%%%%%%%%%

%%%%%%%%%%%%%%%%%%%%%%
%      Section       %
%%%%%%%%%%%%%%%%%%%%%%
\section{Quantum data hiding}\label{sec:qdh}
%%%%%%%%%%%%%%%%%%%%%%
In this section, we show that a bipartite orthogonal quantum state ensemble with small $q_{\rm G}(\mathcal{E})$, the optimal PT-state distinguishability, can be turned into a globally distinguishable ensemble with arbitrary small local distinguishability.
Based on this relation, we provide a sufficient condition for a bipartite quantum state ensemble that can be used to construct a quantum data-hiding scheme. In our scheme, the hidden data is globally accessible, and LOCC measurements can provide arbitrarily little information about the data. 
To analyze how reliably the data is hidden, we first introduce the notion of a \emph{multifold ensemble}.
%%%%%%%%%%%%%%%%%%%%%%

%%%%%%%%%%%%%%%%%%%%%%
For positive integers $n$ and $L$, we denote 
$\mathbb{Z}_{n}$ as the set of all integers from $0$ to $n-1$ and $\mathbb{Z}_{n}^{L}$ as the Cartesian product of $L$ copies of $\mathbb{Z}_{n}$. 
For the ensemble $\mathcal{E}=\{\eta_{i},\rho_{i}\}_{i=0}^{n-1}$ in Eq.~\eqref{eq:ens} and each vector
\begin{equation}
\vec{c}=(c_{1},\ldots,c_{L})\in\mathbb{Z}_{n}^{L},
\end{equation}
we define
\begin{equation}\label{eq:erlf}
\eta_{\vec{c}}=\prod_{l=1}^{L}\eta_{c_{l}},~
\rho_{\vec{c}}=\bigotimes_{l=1}^{L}\rho_{c_{l}},
\end{equation}
where $\rho_{c_{l}}$ is the state in the ensemble $\mathcal{E}$ whose index is the $l$th coordinate of the vector $\vec{c}$, and $\eta_{c_{l}}$ is the corresponding probability for $l=1,\ldots,L$.
We further use $\mathcal{E}^{\otimes L}$ to denote the \emph{$L$-fold ensemble} of $\mathcal{E}$, that is,
\begin{equation}\label{eq:lfe}
\mathcal{E}^{\otimes L}=\{\eta_{\vec{c}},\rho_{\vec{c}}\}_{\vec{c}\in\mathbb{Z}_{n}^{L}}.
\end{equation}
For each $\vec{c}\in\mathbb{Z}_{n}^{L}$, we define $\omega_{n}(\vec{c})$ 
as the modulo $n$ summation of all entries in $\vec{c}=(c_{1},\ldots,c_{L})$, that is,
\begin{equation}\label{eq:mns}
\omega_{n}(\vec{c})=\sum_{l=1}^{L}c_{l}~(\mathrm{mod}~n).
\end{equation}
Clearly, we have $\omega_{n}(\vec{c})\in\mathbb{Z}_{n}$ for any $\vec{c}\in\mathbb{Z}_{n}^{L}$.
%%%%%%%%%%%%%%%%%%%%%%

%%%%%%%%%%%%%%%%%%%%%%
For the state $\rho_{\vec{c}}$ prepared from the ensemble $\mathcal{E}^{\otimes L}$, let us consider the situation of guessing $\omega_{n}(\vec{c})$ of the prepared state $\rho_{\vec{c}}$ from $\mathcal{E}^{\otimes L}$.
This situation is equivalent to discriminating the states $\rho_{0}^{(L)},\ldots,\rho_{n-1}^{(L)}$ prepared with the probabilities $\eta_{0}^{(L)},\ldots,\eta_{n-1}^{(L)}$, respectively, where
\begin{eqnarray}
\eta_{i}^{(L)}=
\sum_{\substack{\vec{c}\in\mathbb{Z}_{n}^{L}\\ \omega_{n}(\vec{c})=i}}
\eta_{\vec{c}},~
\rho_{i}^{(L)}=\frac{1}{\eta_{i}^{(L)}}
\sum_{\substack{\vec{c}\in\mathbb{Z}_{n}^{L}\\ \omega_{n}(\vec{c})=i}}\eta_{\vec{c}}\rho_{\vec{c}}.\label{eq:rhil}
\end{eqnarray}
In other words, quantum state discrimination of the ensemble 
\begin{equation}\label{eq:eld}
\mathcal{E}^{(L)}=\{\eta_{i}^{(L)},\rho_{i}^{(L)}\}_{i=0}^{n-1}.
\end{equation}
%%%%%%%%%%%%%%%%%%%%%%

%%%%%%%%%%%%%%%%%%%%%%
For a given two-state ensemble $\mathcal{E}=\{\eta_{i},\rho_{i}\}_{i=0}^{1}$, the following theorem gives the closed-form expression of $q_{\rm G}(\mathcal{E}^{(L)})$.
The proof of Theorem~\ref{thm:tsege} is given in Appendix~\ref{app:thm3}.
%%%%%%%%%%%%%%%%%%%%%%

%%%%%%%%%%%%%%%%%%%%%%
%      Theorem       %
%%%%%%%%%%%%%%%%%%%%%%
\begin{theorem}\label{thm:tsege}
For a bipartite quantum state ensemble $\mathcal{E}=\{\eta_{i},\rho_{i}\}_{i=0}^{1}$ and any positive integer $L$,
\begin{equation}\label{eq:qge2}
q_{\rm G}(\mathcal{E}^{(L)})=\frac{1}{2}+\frac{1}{2}\big(\Tr|\eta_{0}\rho_{0}^{\PT}-\eta_{1}\rho_{1}^{\PT}|\big)^{L}
=\frac{1}{2}+\frac{1}{2}\big[2 q_{\rm G}(\mathcal{E})-1\big]^{L}.
\end{equation}
Moreover, we have
\begin{equation}\label{eq:plqgl}
p_{\rm L}(\mathcal{E}^{(L)})=q_{\rm G}(\mathcal{E}^{(L)})
\end{equation}
if $p_{\rm L}(\mathcal{E})=q_{\rm G}(\mathcal{E})$.
\end{theorem}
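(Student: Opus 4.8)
The plan is to reduce both equalities in \eqref{eq:qge2} to Theorem~\ref{thm:tsqg} together with one tensor-power identity, and then to handle the ``moreover'' part by exhibiting an explicit LOCC strategy. Since $\mathcal{E}=\{\eta_{i},\rho_{i}\}_{i=0}^{1}$ has $n=2$ states, the grouped ensemble $\mathcal{E}^{(L)}=\{\eta_{i}^{(L)},\rho_{i}^{(L)}\}_{i=0}^{1}$ defined in \eqref{eq:rhil}--\eqref{eq:eld} is again a two-state ensemble, so Theorem~\ref{thm:tsqg} applies directly and gives
\begin{equation}
q_{\rm G}(\mathcal{E}^{(L)})=\tfrac{1}{2}+\tfrac{1}{2}\Tr\big|\eta_{0}^{(L)}(\rho_{0}^{(L)})^{\PT}-\eta_{1}^{(L)}(\rho_{1}^{(L)})^{\PT}\big|.
\end{equation}
Thus the first line of \eqref{eq:qge2} follows once I show that the operator inside the trace norm is the $L$-fold tensor power of $\Gamma:=\eta_{0}\rho_{0}^{\PT}-\eta_{1}\rho_{1}^{\PT}$ and that $\Tr|\Gamma^{\otimes L}|=(\Tr|\Gamma|)^{L}$; the second equality in \eqref{eq:qge2} is then immediate from Theorem~\ref{thm:tsqg} applied to $\mathcal{E}$, which gives $\Tr|\Gamma|=2q_{\rm G}(\mathcal{E})-1$.

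To establish the operator identity I would first use that, under the bipartite cut in which Alice holds all $L$ copies of the first factor and Bob all $L$ copies of the second, partial transposition factorizes over tensor products, so $\rho_{\vec{c}}^{\PT}=\bigotimes_{l=1}^{L}\rho_{c_{l}}^{\PT}$ and hence $\eta_{i}^{(L)}(\rho_{i}^{(L)})^{\PT}=\sum_{\vec{c}:\,\omega_{2}(\vec{c})=i}\bigotimes_{l=1}^{L}\eta_{c_{l}}\rho_{c_{l}}^{\PT}$ by \eqref{eq:rhil}. Expanding $\Gamma^{\otimes L}=\bigotimes_{l=1}^{L}(\eta_{0}\rho_{0}^{\PT}-\eta_{1}\rho_{1}^{\PT})$ and labeling each summand by the vector $\vec{c}\in\mathbb{Z}_{2}^{L}$ recording which term is chosen in each factor, that summand carries the sign $(-1)^{\omega_{2}(\vec{c})}$ and the operator $\bigotimes_{l}\eta_{c_{l}}\rho_{c_{l}}^{\PT}$; collecting even- and odd-parity $\vec{c}$ then yields exactly $\Gamma^{\otimes L}=\eta_{0}^{(L)}(\rho_{0}^{(L)})^{\PT}-\eta_{1}^{(L)}(\rho_{1}^{(L)})^{\PT}$. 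Finally, from $(\Gamma^{\otimes L})^{\dagger}\Gamma^{\otimes L}=(\Gamma^{\dagger}\Gamma)^{\otimes L}$ and uniqueness of the positive square root one obtains $|\Gamma^{\otimes L}|=|\Gamma|^{\otimes L}$, whence $\Tr|\Gamma^{\otimes L}|=(\Tr|\Gamma|)^{L}$, completing the first line of \eqref{eq:qge2}.

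For \eqref{eq:plqgl}, the bound $p_{\rm L}(\mathcal{E}^{(L)})\leqslant q_{\rm G}(\mathcal{E}^{(L)})$ is already given by \eqref{eq:upb}, so only the reverse inequality requires a strategy. Assuming $p_{\rm L}(\mathcal{E})=q_{\rm G}(\mathcal{E})$, I would fix a two-outcome LOCC measurement $\{M_{0},M_{1}\}$ attaining $p_{\rm L}(\mathcal{E})$ (coarse-graining the outcomes of any optimal LOCC measurement according to their guess), apply it independently to each of the $L$ copies, and output the parity $\sum_{l}g_{l}\,(\mathrm{mod}\,2)$ of the per-copy guesses $g_{l}$ as the guess for $\omega_{2}(\vec{c})$; this is a legitimate LOCC measurement for $\mathcal{E}^{(L)}$. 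Writing the success indicator as $\tfrac{1}{2}\big(1+\prod_{l}(-1)^{g_{l}+c_{l}}\big)$ and using that the $c_{l}$ are i.i.d.\ with weights $(\eta_{0},\eta_{1})$ while the $g_{l}$ are conditionally independent given $\vec{c}$, the average success probability factorizes as $\tfrac{1}{2}+\tfrac{1}{2}\big(\eta_{0}(2\Tr(\rho_{0}M_{0})-1)+\eta_{1}(2\Tr(\rho_{1}M_{1})-1)\big)^{L}=\tfrac{1}{2}+\tfrac{1}{2}(2p_{\rm L}(\mathcal{E})-1)^{L}$, which equals $q_{\rm G}(\mathcal{E}^{(L)})$ by the first part. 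Hence $p_{\rm L}(\mathcal{E}^{(L)})\geqslant q_{\rm G}(\mathcal{E}^{(L)})$, giving \eqref{eq:plqgl}.

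I expect the delicate points to be bookkeeping rather than conceptual: in the first part, correctly matching the parity function $\omega_{2}$ with the sign pattern produced by expanding $\Gamma^{\otimes L}$, and in the ``moreover'' part, verifying that the per-copy signed bias $\eta_{0}(2\Tr(\rho_{0}M_{0})-1)+\eta_{1}(2\Tr(\rho_{1}M_{1})-1)$ reduces to $2p_{\rm L}(\mathcal{E})-1$ for the optimal $\{M_{0},M_{1}\}$ and that the parity-of-guesses protocol is genuinely LOCC. The main obstacle is therefore ensuring the LOCC lower bound is tight; the rest is the tensor-power identity and an application of Theorem~\ref{thm:tsqg}.
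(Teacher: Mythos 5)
Your proposal is correct and follows essentially the same route as the paper: the identity $\eta_{0}^{(L)}\rho_{0}^{(L)\PT}-\eta_{1}^{(L)}\rho_{1}^{(L)\PT}=(\eta_{0}\rho_{0}^{\PT}-\eta_{1}\rho_{1}^{\PT})^{\otimes L}$ combined with Theorem~\ref{thm:tsqg} and $|E^{\otimes L}|=|E|^{\otimes L}$ for the first part, and the parity-of-guesses repetition of an optimal LOCC measurement for the second. Your probabilistic bookkeeping via the indicator $\tfrac{1}{2}\big(1+\prod_{l}(-1)^{g_{l}+c_{l}}\big)$ is just a rephrasing of the paper's operator computation with $M_{i}^{(L)}=\tfrac{1}{2}\big[(M_{0}+M_{1})^{\otimes L}+(-1)^{i}(M_{0}-M_{1})^{\otimes L}\big]$, and yields the same factorized success probability.
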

%%%%%%%%%%%%%%%%%%%%%%

%%%%%%%%%%%%%%%%%%%%%%
Theorem~\ref{thm:tsege} provides the closed-form expression of $q_{\rm G}(\mathcal{E}^{(L)})$ for the case of the two-state ensemble $\mathcal{E}$.
The following theorem provides an upper bound of $q_{\rm G}(\mathcal{E}^{(L)})$ for a bipartite quantum state ensemble $\mathcal{E}$ with an arbitrary number of states.
The proof of Theorem~\ref{thm:lqge} is given in Appendix~\ref{app:thm4}.
%%%%%%%%%%%%%%%%%%%%%%

%%%%%%%%%%%%%%%%%%%%%%
%      Theorem       %
%%%%%%%%%%%%%%%%%%%%%%
\begin{theorem}\label{thm:lqge}
For a bipartite quantum state ensemble $\mathcal{E}=\{\eta_{i},\rho_{i}\}_{i=0}^{n-1}$ and any positive integer $L$, we have
\begin{equation}\label{eq:qgel}
q_{\rm G}(\mathcal{E}^{(L)})\leqslant\frac{1}{n}+\frac{n-1}{n}\big[n q_{\rm G}(\mathcal{E})-1\big]^{L}.
\end{equation}
\end{theorem}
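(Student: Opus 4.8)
The plan is to apply Theorem~\ref{thm:ubqg} to the $L$-fold construction by exhibiting a single Hermitian operator $H$ that dominates every $\eta_{i}^{(L)}(\rho_{i}^{(L)})^{\PT}$ and whose trace equals the right-hand side of \eqref{eq:qgel}. Writing $\sigma_{j}=\eta_{j}\rho_{j}^{\PT}$, I would first exploit the convolution structure of \eqref{eq:rhil}: because partial transposition factorizes across the $L$ tensor factors, $\eta_{i}^{(L)}(\rho_{i}^{(L)})^{\PT}=\sum_{\vec{c}:\,\omega_{n}(\vec{c})=i}\bigotimes_{l}\sigma_{c_{l}}$. Introducing the primitive root $\omega=e^{2\pi\mathrm{i}/n}$ and the Fourier operators $T_{k}=\sum_{j=0}^{n-1}\omega^{kj}\sigma_{j}$, the character identity $\mathbbm{1}[\omega_{n}(\vec{c})=i]=\frac{1}{n}\sum_{k}\omega^{-ki}\omega^{k\omega_{n}(\vec{c})}$ together with $\omega^{k\omega_{n}(\vec{c})}=\prod_{l}\omega^{kc_{l}}$ collapses the constrained sum into the clean representation $\eta_{i}^{(L)}(\rho_{i}^{(L)})^{\PT}=\frac{1}{n}\sum_{k=0}^{n-1}\omega^{-ki}T_{k}^{\otimes L}$. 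Here $T_{0}=(\sum_{j}\eta_{j}\rho_{j})^{\PT}$ has unit trace, $T_{k}^{\dagger}=T_{n-k}$, and since $T_{k}^{\dagger}T_{k}$ is positive one has $|T_{k}^{\otimes L}|=|T_{k}|^{\otimes L}$.

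Next I would propose the candidate $H=\frac{1}{n}T_{0}^{\otimes L}+\frac{1}{n}\sum_{k=1}^{n-1}|T_{k}|^{\otimes L}$, which is Hermitian (as $T_{0}$ is Hermitian and each $|T_{k}|^{\otimes L}$ is positive) and has trace $\frac{1}{n}+\frac{1}{n}\sum_{k=1}^{n-1}(\Tr|T_{k}|)^{L}$. The heart of the argument is to verify $H-\eta_{i}^{(L)}(\rho_{i}^{(L)})^{\PT}\succeq0$, i.e.\ $\sum_{k=1}^{n-1}(|T_{k}|^{\otimes L}-\omega^{-ki}T_{k}^{\otimes L})\succeq0$. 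I would pair the index $k$ with $n-k$ and invoke the operator inequality $|M|+|M^{\dagger}|\succeq\bar\lambda M+\lambda M^{\dagger}$, valid for every operator $M$ and every unit scalar $\lambda$ (a consequence of the positivity of the $2\times2$ block operator built from the polar decomposition of $M$), applied with $M=T_{k}^{\otimes L}$ and $\lambda=\omega^{ki}$, so that $|T_{k}^{\otimes L}|=|M|$ and $|T_{n-k}^{\otimes L}|=|M^{\dagger}|$ combine to dominate the paired Hermitian contribution. The possible self-paired term $k=n/2$ (for even $n$) is itself Hermitian and is disposed of by $|N|\succeq\pm N$.

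The final step is the bound $\Tr|T_{k}|\leqslant nq_{\rm G}(\mathcal{E})-1$ for each $k=1,\ldots,n-1$. Taking a measurement $\{M_{j}^{*}\}$ realizing $q_{\rm G}(\mathcal{E})$ and setting $H_{*}=\sum_{j}\eta_{j}\rho_{j}^{\PT}M_{j}^{*}$, Proposition~\ref{prop:qgm} yields $D_{j}:=H_{*}-\eta_{j}\rho_{j}^{\PT}\succeq0$, while $\Tr H_{*}=q_{\rm G}(\mathcal{E})$ and hence $\sum_{j}\Tr D_{j}=n\Tr H_{*}-\sum_{j}\eta_{j}=nq_{\rm G}(\mathcal{E})-1$. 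Since $\sum_{j}\omega^{kj}=0$ for $k\neq0$, the Fourier operator simplifies to $T_{k}=-\sum_{j}\omega^{kj}D_{j}$, so the triangle inequality for the trace norm together with $D_{j}\succeq0$ gives $\Tr|T_{k}|\leqslant\sum_{j}\Tr D_{j}=nq_{\rm G}(\mathcal{E})-1$. Combining with Theorem~\ref{thm:ubqg} then gives $q_{\rm G}(\mathcal{E}^{(L)})\leqslant\Tr H=\frac{1}{n}+\frac{1}{n}\sum_{k=1}^{n-1}(\Tr|T_{k}|)^{L}\leqslant\frac{1}{n}+\frac{n-1}{n}[nq_{\rm G}(\mathcal{E})-1]^{L}$, which is \eqref{eq:qgel}.

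I expect the main obstacle to be the positivity $H\succeq\eta_{i}^{(L)}(\rho_{i}^{(L)})^{\PT}$: for $n>2$ the summands $\omega^{-ki}T_{k}^{\otimes L}$ are individually non-Hermitian and the naive estimate $|N|\succeq\pm N$ does not apply, so the proof genuinely requires the $k\leftrightarrow n-k$ pairing and the block-positivity lemma $|M|+|M^{\dagger}|\succeq\bar\lambda M+\lambda M^{\dagger}$; once this is in place, the trace computation and the bound on $\Tr|T_{k}|$ via Proposition~\ref{prop:qgm} are routine.
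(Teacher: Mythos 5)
Your proof is correct, but it takes a genuinely different route from the paper. The paper works entirely in the ``position'' picture: it takes an optimal measurement for $q_{\rm G}(\mathcal{E})$, forms the positive operators $W_{i}=H-\eta_{i}\rho_{i}^{\PT}$ guaranteed by Proposition~\ref{prop:qgm}, and proves a telescoping tensor-product identity (Lemma~\ref{lem:tpl}) showing that $\sum_{\omega_{n}(\vec{c})=i}W_{\vec{c}}$ differs from $(-1)^{L}\eta_{i}^{(L)}\rho_{i}^{(L)\PT}$ by an $i$-independent operator; this forces a case split on the parity of $L$ and yields the bound $\tfrac{1}{n}+\tfrac{1}{n}[nq_{\rm G}(\mathcal{E})-1]^{L}$ for odd $L$ and $\tfrac{1}{n}+\tfrac{n-1}{n}[nq_{\rm G}(\mathcal{E})-1]^{L}$ for even $L$. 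You instead diagonalize the $\mathbb{Z}_{n}$-convolution structure of Eq.~\eqref{eq:rhil} by a discrete Fourier transform, writing $\eta_{i}^{(L)}\rho_{i}^{(L)\PT}=\tfrac{1}{n}\sum_{k}\omega^{-ki}T_{k}^{\otimes L}$, and certify the dominating operator $H=\tfrac{1}{n}T_{0}^{\otimes L}+\tfrac{1}{n}\sum_{k\geqslant1}|T_{k}|^{\otimes L}$ via the polar-decomposition inequality $|M|+|M^{\dagger}|\succeq\bar\lambda M+\lambda M^{\dagger}$ applied to the pairs $(k,n-k)$; note that the same $W_{i}$'s reappear in your bound $\Tr|T_{k}|\leqslant nq_{\rm G}(\mathcal{E})-1$. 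Your route avoids the parity distinction, gives the uniform and in principle sharper intermediate bound $q_{\rm G}(\mathcal{E}^{(L)})\leqslant\tfrac{1}{n}+\tfrac{1}{n}\sum_{k=1}^{n-1}(\Tr|T_{k}|)^{L}$, and for $n=2$ reduces exactly to the equality of Theorem~\ref{thm:tsege}; the price is the extra operator lemma, which is standard but does need the explicit polar-decomposition argument (e.g.\ expanding $(\lambda U P-P)(\lambda U P-P)^{\dagger}\succeq0$ with $M=U|M|$ and $P=|M|^{1/2}$), together with the observation that $nq_{\rm G}(\mathcal{E})-1\geqslant0$ so that the $L$th powers can be compared. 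The paper's argument is more elementary but more ad hoc; yours exposes the group-theoretic reason the bound decays like $[nq_{\rm G}(\mathcal{E})-1]^{L}$.
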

%%%%%%%%%%%%%%%%%%%%%%

%%%%%%%%%%%%%%%%%%%%%%
From Theorem \ref{thm:lqge}, we have
\begin{equation}\label{eq:npln}
\frac{1}{n}\leqslant p_{\rm L}(\mathcal{E}^{(L)})\leqslant q_{\rm G}(\mathcal{E}^{(L)})
\leqslant\frac{1}{n}+\frac{n-1}{n}\big[n q_{\rm G}(\mathcal{E})-1\big]^{L},
\end{equation}
where the first inequality is due to inequality~\eqref{eq:inq}, the second inequality is from inequality~\eqref{eq:upb}, and the last inequality is from Theorem~\ref{thm:lqge}.
For the case of $q_{\rm G}(\mathcal{E})<\frac{2}{n}$,
the last term in Eq.~\eqref{eq:npln} decreases to $\frac{1}{n}$ as $L$ gets larger.
Thus, we have the following corollary.
%%%%%%%%%%%%%%%%%%%%%%

%%%%%%%%%%%%%%%%%%%%%%
%    Corollary       %
%%%%%%%%%%%%%%%%%%%%%%
\begin{corollary}\label{cor:lim}
For a bipartite quantum state ensemble $\mathcal{E}=\{\eta_{i},\rho_{i}\}_{i=0}^{n-1}$ with
$q_{\rm G}(\mathcal{E})<\frac{2}{n}$,
\begin{equation}\label{eq:pln}
\lim_{L\rightarrow\infty}p_{\rm L}(\mathcal{E}^{(L)})=\frac{1}{n}.
\end{equation}
\end{corollary}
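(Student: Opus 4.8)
The plan is to read off the desired limit directly from the sandwich inequality~\eqref{eq:npln}, so that the whole argument reduces to a squeeze and the only genuine check is that the base of the exponential in that bound lies in $[0,1)$. First I would recall that \eqref{eq:npln} already supplies, for every positive integer $L$,
\begin{equation}
\frac{1}{n}\leqslant p_{\rm L}(\mathcal{E}^{(L)})\leqslant\frac{1}{n}+\frac{n-1}{n}\big[n q_{\rm G}(\mathcal{E})-1\big]^{L}.
\end{equation}
Since the lower bound is the constant $\tfrac1n$, it suffices to show that the upper bound converges to $\tfrac1n$ as $L\to\infty$, which is equivalent to $\big[n q_{\rm G}(\mathcal{E})-1\big]^{L}\to0$.

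Second, I would bound the base $n q_{\rm G}(\mathcal{E})-1$ on both sides. The hypothesis $q_{\rm G}(\mathcal{E})<\tfrac{2}{n}$ immediately gives $n q_{\rm G}(\mathcal{E})-1<1$. For the matching lower bound I would chain the inequalities collected earlier: combining \eqref{eq:inq} and \eqref{eq:upb} with the observation that the priors sum to one, so that $\max\{\eta_0,\ldots,\eta_{n-1}\}\geqslant\tfrac1n$, yields
\begin{equation}
q_{\rm G}(\mathcal{E})\geqslant p_{\rm L}(\mathcal{E})\geqslant\max\{\eta_0,\ldots,\eta_{n-1}\}\geqslant\frac{1}{n}.
\end{equation}
Hence $n q_{\rm G}(\mathcal{E})-1\geqslant0$, and together with the previous estimate we obtain $0\leqslant n q_{\rm G}(\mathcal{E})-1<1$.

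Finally, since the base lies in $[0,1)$, its $L$th power decreases monotonically to $0$, so the upper bound in the displayed sandwich tends to $\tfrac1n$ while the lower bound is identically $\tfrac1n$; the squeeze theorem then forces $\lim_{L\to\infty}p_{\rm L}(\mathcal{E}^{(L)})=\tfrac1n$, as claimed. I do not anticipate a real obstacle here, because the substantive content has already been absorbed into Theorem~\ref{thm:lqge} and the preparatory bounds summarized in \eqref{eq:npln}. The only point requiring care is confirming the nonnegativity of the base, and the single ingredient for that which is not written out explicitly in \eqref{eq:npln} is the elementary bound $\max_i\eta_i\geqslant\tfrac1n$ coming from normalization of the prior probabilities.
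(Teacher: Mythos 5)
Your proposal is correct and follows essentially the same route as the paper, which obtains the corollary directly by squeezing $p_{\rm L}(\mathcal{E}^{(L)})$ between the two sides of \eqref{eq:npln} and letting $L\to\infty$. Your explicit verification that $0\leqslant n q_{\rm G}(\mathcal{E})-1<1$ (via $q_{\rm G}(\mathcal{E})\geqslant\max_i\eta_i\geqslant\tfrac1n$) is a point the paper leaves implicit here, though it invokes the same nonnegativity in the proof of Theorem~\ref{thm:lqge}.
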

We further note that the last term in Eq.~\eqref{eq:npln} decreases to $\frac{1}{n}$ exponentially fast with respect to $L$.
Thus, the convergence of $p_{\rm L}(\mathcal{E}^{(L)})$ to  $\frac{1}{n}$ is also exponentially fast with respect to $L$.
%%%%%%%%%%%%%%%%%%%%%%

%%%%%%%%%%%%%%%%%%%%%%
In information theory, data hiding is the protection of a data by sharing it with multiple users so that the data becomes accessible only as a consequence of their cooperative action. 
By using particular bipartite quantum states, several quantum data-hiding schemes have been proposed to share data between two parties, Alice and Bob \cite{terh20011,divi2002,egge2002,mori2013,pian2014,peng2021,lami2021}.
If Alice and Bob can perform only LOCC measurements, they can obtain arbitrarily little information about the hidden data.
They can access the data only by global measurements, which require either a quantum channel, shared quantum entanglement, or direct interaction between them.
%%%%%%%%%%%%%%%%%%%%%%

%%%%%%%%%%%%%%%%%%%%%%
\begin{figure*}[!tt]
\centerline{\includegraphics{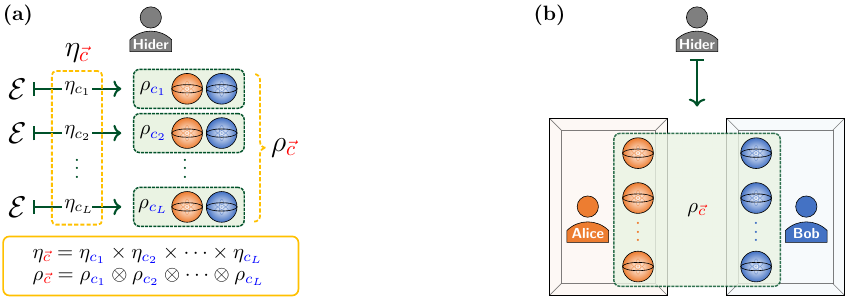}}
\caption{Data-hiding scheme based on a bipartite orthogonal quantum state ensemble $\mathcal{E}=\{\eta_{i},\rho_{i}\}_{i=0}^{n-1}$ with $q_{\rm G}(\mathcal{E})<\frac{2}{n}$
(step~1). (a) Hider first prepares the state $\rho_{\vec{c}}$ with the corresponding probability $\eta_{\vec{c}}$ from the $L$-fold ensemble of $\mathcal{E}$. (b) Hider sends $\rho_{\vec{c}}$ to Alice and Bob. 
}\label{fig:qdh1}
\end{figure*}
%%%%%%%%%%%%%%%%%%%%%%

%%%%%%%%%%%%%%%%%%%%%%
Here, we establish a relation between a bipartite quantum state ensemble and its role in quantum data hiding. 
For a given bipartite quantum state ensemble $\mathcal{E}$, we first provide a condition in terms of $p_{\rm G}(\mathcal{E})$ and $q_{\rm G}(\mathcal{E})$ defined in Eqs.~\eqref{eq:pge} and \eqref{eq:qge}, respectively. 
We show that $\mathcal{E}$ can be used to construct a quantum data hiding scheme if $\mathcal{E}$ satisfies the condition.
The data is globally accessible, and LOCC measurements can provide arbitrarily little information about the data. In other words, the data is perfectly hidden asymptotically.
%%%%%%%%%%%%%%%%%%%%%%

%%%%%%%%%%%%%%%%%%%%%%
For any integer $n\geqslant 2$, let us consider a bipartite quantum state ensemble $\mathcal{E}=\{\eta_{i},\rho_{i}\}_{i=0}^{n-1}$ satisfying the following condition:
\begin{equation}\label{eq:pqgc}
p_{\rm G}(\mathcal{E})=1,~~q_{\rm G}(\mathcal{E})<\frac{2}{n}.
\end{equation}
By using the ensemble $\mathcal{E}$, we construct a quantum data-hiding scheme that can hide an $n$-ary classical data, that is, one in $\{0,\ldots,n-1\}$.
%%%%%%%%%%%%%%%%%%%%%%

%%%%%%%%%%%%%%%%%%%%%%
The hider, the third party, first chooses a bipartite quantum state $\rho_{c_{1}}$ from $\mathcal{E}$ with the corresponding probability $\eta_{c_{1}}$ for $c_{1}\in\mathbb{Z}_{n}$, and shares it with Alice and Bob. 
The hider repeats this process $L$ times, so that $\rho_{c_{l}}$ is chosen with probability $\eta_{c_{l}}$ in the $l$th repetition for $l=1,\ldots,L$.
This whole procedure is equivalent to the situation in which the hider first prepares the state 
\begin{equation}\label{eq:psm}
\rho_{\vec{c}}=\rho_{c_{1}}\otimes\cdots\otimes\rho_{c_{L}}
\end{equation}
from the $L$-fold ensemble $\mathcal{E}^{\otimes L}$ in Eq.~\eqref{eq:lfe}
with the corresponding probability $\eta_{\vec{c}}$ for $\vec{c}=(c_{1},\ldots,c_{L})\in\mathbb{Z}_{n}^{L}$, and sends it to Alice and Bob. 
This step is illustrated in Fig.~\ref{fig:qdh1}.
%%%%%%%%%%%%%%%%%%%%%%

%%%%%%%%%%%%%%%%%%%%%%
To hide classical data $x\in\mathbb{Z}_{n}$, the hider broadcasts the classical information $z$ to Alice and Bob,
\begin{equation}\label{eq:xyz}
z=x\oplus y,
\end{equation}
where $\oplus$ is the modulo $n$ addition and
\begin{equation}\label{eq:yonc}
y=\omega_{n}(\vec{c})
\end{equation}
for the prepared state $\rho_{\vec{c}}$ in Eq.~\eqref{eq:psm}.
We illustrate this step in Fig.~\ref{fig:qdh2}. 
%%%%%%%%%%%%%%%%%%%%%%

%%%%%%%%%%%%%%%%%%%%%%%
\begin{figure}[!tt]
\centerline{\includegraphics{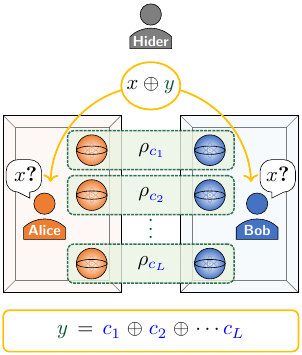}}
\caption{Data-hiding scheme based on a bipartite orthogonal quantum state ensemble $\mathcal{E}=\{\eta_{i},\rho_{i}\}_{i=0}^{n-1}$ with $q_{\rm G}(\mathcal{E})<\frac{2}{n}$
(step~2).
To hide the classical data $x\in\{0,\ldots,n-1\}$, the hider broadcasts $x\oplus y$ to Alice and Bob, where $y$ is the modulo $n$ summation of all entries in $\vec{c}$
for the prepared state $\rho_{\vec{c}}$.}\label{fig:qdh2}
\end{figure}
%%%%%%%%%%%%%%%%%%%%%%%

%%%%%%%%%%%%%%%%%%%%%%
As the information $z$ in Eq.~\eqref{eq:xyz} was broadcasted to Alice and Bob, guessing the data $x$ is equivalent to guessing the data $y$. 
Moreover, guessing the data $y$ in Eq.~\eqref{eq:yonc} is equivalent to discriminating the states in Eq.~\eqref{eq:rhil} due to the argument in the paragraph containing Eqs.~\eqref{eq:mns} and \eqref{eq:rhil}.
Thus, the maximum average probability of correctly guessing the data $x$ is equal to the maximum average probability of discriminating the states from the ensemble $\mathcal{E}^{(L)}$ in Eq.~\eqref{eq:eld}, that is, $p_{\rm G}(\mathcal{E}^{(L)})$.
Similarly, the maximum average probability of correctly guessing the data $x$ using only LOCC measurements becomes $p_{\rm L}(\mathcal{E}^{(L)})$.
%%%%%%%%%%%%%%%%%%%%%%

%%%%%%%%%%%%%%%%%%%%%%
The condition $p_{\rm G}(\mathcal{E})=1$ in Eq.~\eqref{eq:pqgc} means the states $\rho_{i}$ of $\mathcal{E}$ are mutually orthogonal. 
In this case, the states $\rho_{\vec{c}}$ of $\mathcal{E}^{\otimes L}$ in Eq.~\eqref{eq:erlf} are mutually orthogonal, which also implies the mutual orthogonality of $\rho_{i}^{(L)}$ of $\mathcal{E}^{(L)}$ in Eq.~\eqref{eq:rhil}.
Thus, we have
\begin{equation}\label{eq:dpg1}
p_{\rm G}(\mathcal{E}^{(L)})=1.
\end{equation}
In other words, the data $x$ is accessible when Alice and Bob can collaborate to use global measurements.
%%%%%%%%%%%%%%%%%%%%%%

%%%%%%%%%%%%%%%%%%%%%%
On the other hand, the condition $q_{\rm G}(\mathcal{E})<\frac{2}{n}$ in Eq.~\eqref{eq:pqgc} and Corollary~\ref{cor:lim} imply that
$p_{\rm L}(\mathcal{E}^{(L)})$ can be arbitrarily close to $\frac{1}{n}$ with the choice of an appropriately large $L$.
Thus, the globally accessible data $x$ is perfectly hidden asymptotically if Alice and Bob can use only LOCC measurements.
%%%%%%%%%%%%%%%%%%%%%%

%%%%%%%%%%%%%%%%%%%%%%
%      Section       %
%%%%%%%%%%%%%%%%%%%%%%
\section{Examples}\label{sec:ex}
%%%%%%%%%%%%%%%%%%%%%%
In this section, we provide examples of bipartite quantum state ensemble $\mathcal{E}$ satisfying condition~\eqref{eq:pqgc} to illustrate our results. 
First, we present the following example of the two-state ensemble $\mathcal{E}=\{\eta_{i},\rho_{i}\}_{i=0}^{1}$.
%%%%%%%%%%%%%%%%%%%%%%
\begin{example}\label{ex:npt}
For a NPT quantum state $\sigma$, let us consider the two-state ensemble $\mathcal{E}=\{\eta_{i},\rho_{i}\}_{i=0}^{1}$ consisting of 
\begin{flalign}\label{eq:ex1}
\eta_{0}=\frac{\Tr|\sigma^{\PT}|+1}{2\Tr|\sigma^{\PT}|},&~~
\rho_{0}=\frac{|\sigma^{\PT}|+\sigma^{\PT}}{\Tr|\sigma^{\PT}|+1},\nonumber\\
\eta_{1}=\frac{\Tr|\sigma^{\PT}|-1}{2\Tr|\sigma^{\PT}|},&~~
\rho_{1}=\frac{|\sigma^{\PT}|-\sigma^{\PT}}{\Tr|\sigma^{\PT}|-1}.
\end{flalign}
We first note that $\Tr|\sigma^{\PT}|>1$ because $\sigma$ is a NPT state.
\end{example}
%%%%%%%%%%%%%%%%%%%%%%
It is straightforward to verify
\begin{equation}\label{eq:exq1}
\Tr|\eta_{0}\rho_{0}^{\PT}-\eta_{1}\rho_{1}^{\PT}|=\frac{1}{\Tr|\sigma^{\PT}|}.
\end{equation}
From Theorem~\ref{thm:tsqg} together with Eqs.~\eqref{eq:ex1} and \eqref{eq:exq1}, we have
\begin{equation}\label{eq:eqge}
q_{\rm G}(\mathcal{E})=
\frac{\Tr|\sigma^{\PT}|+1}{2\Tr|\sigma^{\PT}|}
=\eta_{0}.
\end{equation}
%%%%%%%%%%%%%%%%%%%%%%

%%%%%%%%%%%%%%%%%%%%%%
We also note that
\begin{equation}\label{eq:exqg1}
\eta_{0}\leqslant p_{\rm L}(\mathcal{E})\leqslant
q_{\rm G}(\mathcal{E})=\eta_{0},
\end{equation}
where the first inequality is from inequality~\eqref{eq:inq}, 
the second inequality is due to inequality~\eqref{eq:upb},
and the last equality is from Eq.~\eqref{eq:eqge}.
Thus, we have
\begin{equation}\label{eq:pleg}
p_{\rm L}(\mathcal{E})=q_{\rm G}(\mathcal{E})=\eta_{0}.
\end{equation}
From Theorem~\ref{thm:tsege} together with Eqs.~\eqref{eq:exq1} and \eqref{eq:pleg}, we have
\begin{equation}\label{eq:eqlv}
p_{\rm L}(\mathcal{E}^{(L)})=q_{\rm G}(\mathcal{E}^{(L)})=\frac{1}{2}+\frac{1}{2(\Tr|\sigma^{\PT}|)^{L}}
\end{equation}
for any positive integer $L$.
%%%%%%%%%%%%%%%%%%%%%%

%%%%%%%%%%%%%%%%%%%%%%
Because $|E|^{2}=E^{2}$ for any Hermitian operator $E$, we can easily check that $\rho_{0}$ and $\rho_{1}$ in Eq.~\eqref{eq:ex1} are orthogonal; therefore, the condition $p_{\rm G}(\mathcal{E})=1$ in Eq.~\eqref{eq:pqgc} holds.
Moreover, the condition $q_{\rm G}(\mathcal{E})<1$ in Eq.~\eqref{eq:pqgc} holds due to  Eq.~\eqref{eq:pleg}.
Thus, the ensemble $\mathcal{E}$ in Example~\ref{ex:npt} can be used to construct a data-hiding scheme that hides one classical bit.
%%%%%%%%%%%%%%%%%%%%%%

%%%%%%%%%%%%%%%%%%%%%%
Now, we provide another example of an ensemble with an arbitrary number of states in multidimensional bipartite quantum systems.
%%%%%%%%%%%%%%%%%%%%%%

%%%%%%%%%%%%%%%%%%%%%%
%      Example       %
%%%%%%%%%%%%%%%%%%%%%%
\begin{example}\label{ex:gex}
For positive integers $d$,\,$m$, and $n$ with $d\geqslant2$ and $2^{m-1}<n\leqslant 2^{m}$,
let us consider the $d^{m}\otimes d^{m}$ quantum state ensemble $\mathcal{E}=\{\eta_{i},\rho_{i}\}_{i=0}^{n-1}$ consisting of $m$-fold tensor products of $d\otimes d$ quantum states $(\mathbbm{1}\pm\mathcal{F})/(d^{2}\pm d)$,
\begin{equation}\label{eq:enx2}
\eta_{i}=\frac{1}{\mathcal{N}}\prod_{k=1}^{m}\big[d^{2}+(-1)^{b_{k}(i)}d\big],~
\rho_{i}=\bigotimes_{k=1}^{m}\frac{\mathbbm{1}+(-1)^{b_{k}(i)}\mathcal{F}}{d^{2}+(-1)^{b_{k}(i)}d},
\end{equation}
where $b_{k}(i)$ is the $k$th digit in the $m$-digit binary representation of $i$,
\begin{equation}\label{eq:bki}
\sum_{k=1}^{m}b_{k}(i) 2^{k-1}=i,
\end{equation}
$\mathcal{F}$ is the flip operator in a two-qudit system,
\begin{equation}\label{eq:foe}
\mathcal{F}=\sum_{i,j=0}^{d-1}\ketbra{ij}{ji},~
\end{equation}
and $\mathcal{N}$ is the normalization factor,
\begin{equation}\label{eq:idme}
\mathcal{N}=\sum_{i=0}^{n-1}\prod_{k=1}^{m}\big[d^{2}+(-1)^{b_{k}(i)}d\big].
\end{equation}
Note that $(\mathbbm{1}+\mathcal{F})/(d^{2}+d)$ and $(\mathbbm{1}-\mathcal{F})/(d^{2}-d)$ are PPT and NPT Werner states in a two-qudit system, respectively \cite{wern1989}.
\end{example}
%%%%%%%%%%%%%%%%%%%%%%

%%%%%%%%%%%%%%%%%%%%%%
It is straightforward to verify
\begin{flalign}
&(\mathbbm{1}+\mathcal{F})^{\PT}=s_{0}^{(0)}\Pi_{0}+s_{1}^{(0)}\Pi_{1},~
s_{0}^{(0)}=1+d,~s_{1}^{(0)}=1,\nonumber\\
&(\mathbbm{1}-\mathcal{F})^{\PT}=s_{0}^{(1)}\Pi_{0}+s_{1}^{(1)}\Pi_{1},~
s_{0}^{(1)}=1-d,~s_{1}^{(1)}=1,\label{eq:erp}
\end{flalign}
where $\Pi_{0}$ and $\Pi_{1}$ are mutually orthogonal projection operators such that
\begin{equation}\label{eq:opso}
\Pi_{0}=\frac{1}{d}\sum_{i,j=0}^{d-1}\ketbra{ii}{jj},~
\Pi_{1}=\mathbbm{1}-\frac{1}{d}\sum_{i,j=0}^{d-1}\ketbra{ii}{jj}.
\end{equation}
For each $i=0,\ldots,n-1$, we can see from Eq.~\eqref{eq:erp} that
\begin{equation}\label{eq:erb}
\eta_{i}\rho_{i}^{\PT}
=\frac{1}{\mathcal{N}}\sum_{\vec{a}\in\mathbb{Z}_{2}^{m}}
s_{\vec{a}}^{(i)}\Pi_{\vec{a}},
\end{equation}
where
\begin{equation}\label{eq:spam}
s_{\vec{a}}^{(i)}=\prod_{k=1}^{m}s_{a_{k}}^{(b_{k}(i))},~
\Pi_{\vec{a}}=\bigotimes_{k=1}^{m}\Pi_{a_{k}},\,
\vec{a}=(a_{1},\ldots,a_{m}).
\end{equation}
%%%%%%%%%%%%%%%%%%%%%%

%%%%%%%%%%%%%%%%%%%%%%
To obtain $q_{\rm G}(\mathcal{E})$, we use the measurement $\{M_{i}\}_{i=0}^{n-1}$, with $M_{0}=\mathbbm{1}^{\otimes m}$.
For each $i=0,\ldots,n-1$, we have
\begin{equation}\label{eq:emnp}
\sum_{j=0}^{n-1}\eta_{j}\rho_{j}^{\PT}M_{j}-\eta_{i}\rho_{i}^{\PT}
=\frac{1}{\mathcal{N}}\sum_{\vec{a}\in\mathbb{Z}_{2}^{m}}
(s_{\vec{a}}^{(0)}-s_{\vec{a}}^{(i)})\Pi_{\vec{a}}\succeq0,
\end{equation}
where the equality is from Eq.~\eqref{eq:erb} and the positivity is due to $s_{\vec{a}}^{(0)}\geqslant s_{\vec{a}}^{(i)}$ for all $i=0,\ldots,n-1$ and all $\vec{a}\in\mathbb{Z}_{2}^{m}$. Thus, Proposition~\ref{prop:qgm} leads us to 
\begin{equation}\label{eq:qgex2}
q_{\rm G}(\mathcal{E})=\sum_{i=0}^{n-1}\eta_{i}\Tr(\rho_{i}^{\PT}M_{i})=\eta_{0},
\end{equation}
which implies $p_{\rm L}(\mathcal{E})=q_{\rm G}(\mathcal{E})$ because
\begin{equation}\label{eq:exqg2}
\eta_{0}\leqslant p_{\rm L}(\mathcal{E})\leqslant
q_{\rm G}(\mathcal{E})=\eta_{0},
\end{equation}
where the first inequality is due to inequality~\eqref{eq:inq}, 
the second inequality is from inequality~\eqref{eq:upb},
and the equality is from Eq.~\eqref{eq:qgex2}.
%%%%%%%%%%%%%%%%%%%%%%

%%%%%%%%%%%%%%%%%%%%%%
\begin{figure}[!tt]
\centerline{\includegraphics{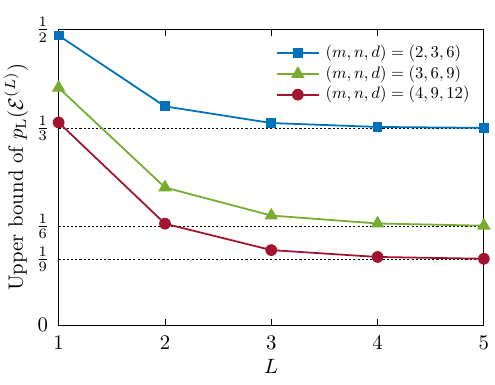}}
\caption{Decreasing of the upper bound of $p_{\rm L}(\mathcal{E}^{(L)})$ in inequality~\eqref{eq:qgel} for the ensemble $\mathcal{E}$ in Example~\ref{ex:gex} with $(m,n,d)=(2,3,6)$, $(3,6,9)$, and $(4,9,12)$.
The upper bounds decrease to $\frac{1}{n}$ exponentially fast with respect to $L$.}\label{fig:exf}
\end{figure}
%%%%%%%%%%%%%%%%%%%%%%

%%%%%%%%%%%%%%%%%%%%%%
We also note that
\begin{equation}\label{eq:qgq}
\eta_{0}=\frac{(d^{2}+d)^{m}}{\mathcal{N}}<\frac{(d^{2}+d)^{m}}{n(d^{2}-d)^{m}}
=\frac{1}{n}\Big(1+\frac{2}{d-1}\Big)^{m},
\end{equation}
where the inequality occurs because
\begin{equation}\label{eq:iqb}
\prod_{k=1}^{m}\big[d^{2}+(-1)^{b_{k}(i)}d\big]>(d^{2}-d)^{m}
\end{equation}
for any $i\in\{0,\ldots,n-1\}$, with $i\neq 2^{m}-1$. From Eq.~\eqref{eq:qgex2} and inequality~\eqref{eq:qgq}, we have
\begin{equation}\label{eq:qn2}
q_{\rm G}(\mathcal{E})<\frac{1}{n}\Big(1+\frac{2}{d-1}\Big)^{m}.
\end{equation}
Thus, we have $q_{\rm G}(\mathcal{E})<\frac{2}{n}$ for
\begin{equation}\label{eq:dcon}
d\geqslant\frac{\sqrt[m]{2}+1}{\sqrt[m]{2}-1}.
\end{equation}
In this case, Corollary~\ref{cor:lim} tells us that $p_{\rm L}(\mathcal{E}^{(L)})$ 
converges to $\tfrac{1}{n}$ as $L$ gets larger.
For the ensemble $\mathcal{E}$ in Eq.~\eqref{eq:enx2} with $(m,n,d)=(2,3,6)$, $(3,6,9)$, and $(4,9,12)$, Fig.~\ref{fig:exf} illustrates the decreasing of an upper bound of $p_{\rm L}(\mathcal{E}^{(L)})$, the right-hand side of inequality~\eqref{eq:qgel}, with respect to $L$.
%%%%%%%%%%%%%%%%%%%%%%

%%%%%%%%%%%%%%%%%%%%%%
Since $\mathbbm{1}+\mathcal{F}$ and $\mathbbm{1}-\mathcal{F}$ are orthogonal, the states of $\mathcal{E}$ in Eq.~\eqref{eq:enx2} are mutually orthogonal; therefore, the condition $p_{\rm G}(\mathcal{E})=1$ in Eq.~\eqref{eq:pqgc} holds.
From inequality~\eqref{eq:qn2}, the condition $q_{\rm G}(\mathcal{E})<\frac{2}{n}$ in Eq.~\eqref{eq:pqgc} also holds
if inequality \eqref{eq:dcon} is satisfied.
Thus, the ensemble $\mathcal{E}$ in Example~\ref{ex:gex} can be used to construct a data-hiding scheme that hides an $n$-ary classical data.
%%%%%%%%%%%%%%%%%%%%%%

%%%%%%%%%%%%%%%%%%%%%%
Moreover, the state $\rho_{0}$ in Eq.~\eqref{eq:enx2} is separable due to the separability of $\mathbbm{1}+\mathcal{F}$.
From this fact, we note that not all states in the ensemble satisfying condition~\eqref{eq:pqgc} necessarily need to be entangled.
%%%%%%%%%%%%%%%%%%%%%%

%%%%%%%%%%%%%%%%%%%%%%
%     Discussion     %
%%%%%%%%%%%%%%%%%%%%%%
\section{Discussion}\label{sec:dsc}
%%%%%%%%%%%%%%%%%%%%%%
We considered bipartite quantum state discrimination and established a relation between a nonlocal quantum state ensemble and quantum data-hiding processing.
Using the bound $q_{\rm G}(\mathcal{E})$ on local minimum-error discrimination of bipartite quantum states, we provided a sufficient condition for a bipartite orthogonal quantum state ensemble that can be used to construct a data-hiding scheme to make classical data globally but not locally accessible, in which the data is perfectly hidden asymptotically (Corollary~\ref{cor:lim}).
Our results were illustrated with examples from multidimensional bipartite quantum systems (Examples~\ref{ex:npt} and \ref{ex:gex}).
%%%%%%%%%%%%%%%%%%%%%%

%%%%%%%%%%%%%%%%%%%%%%
\begin{figure}[!tt]
\centerline{\includegraphics{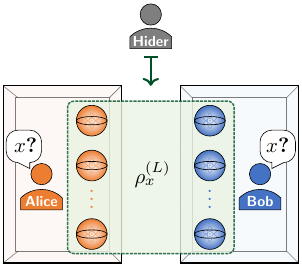}}
\caption{Data-hiding scheme using $n$ bipartite quantum states $\rho_{0}^{(L)},\ldots,\rho_{n-1}^{(L)}$ defined from a bipartite orthogonal quantum state ensemble $\mathcal{E}=\{\eta_{i},\rho_{i}\}_{i=0}^{n-1}$ with $q_{\rm G}(\mathcal{E})<\frac{2}{n}$.
To hide the classical data $x\in\{0,\ldots,n-1\}$, the hider sends the bipartite quantum state $\rho_{x}^{(L)}$ to Alice and Bob.}\label{fig:qdh3}
\end{figure}
%%%%%%%%%%%%%%%%%%%%%%

%%%%%%%%%%%%%%%%%%%%%%
We remark that our results can be used to hide classical data by directly encoding them into a bipartite quantum state.
For any integers $n\geqslant2$, $L\geqslant1$ and a bipartite quantum state ensemble
$\mathcal{E}=\{\eta_{i},\rho_{i}\}_{i=0}^{n-1}$ satisfying condition~\eqref{eq:pqgc}, let us consider the $n$ bipartite quantum states $\rho_{0}^{(L)},\ldots,\rho_{n-1}^{(L)}$ in Eq.~\eqref{eq:rhil}.
%%%%%%%%%%%%%%%%%%%%%%

%%%%%%%%%%%%%%%%%%%%%%
To hide $n$-ary classical data $x\in\mathbb{Z}_{n}$, the hider sends the state $\rho_{x}^{(L)}$ to Alice and Bob. 
This situation is illustrated in Fig.~\ref{fig:qdh3}, which is equivalent to discrimination of the states $\rho_{0}^{(L)},\ldots,\rho_{n-1}^{(L)}$ with the same prior probability $\frac{1}{n}$.
%%%%%%%%%%%%%%%%%%%%%%

%%%%%%%%%%%%%%%%%%%%%%
In the data-hiding scheme presented in Figs.~\ref{fig:qdh1} and \ref{fig:qdh2}, the hidden data $x$ is independent of the prepared state $\rho_{y}^{(L)}$ from the ensemble $\mathcal{E}^{(L)}$ before the information $z=x\oplus y$ is broadcast to Alice and Bob.
On the other hand, the data-hiding scheme presented in Fig.~\ref{fig:qdh3} uses only the states $\rho_{i}^{(L)}$ without the corresponding probabilities $\eta_{i}^{(L)}$; the hidden data $x$ is directly encoded into $\rho_{x}^{(L)}$.
Because Alice and Bob have no information about the hidden data $x$, the preparation of $\rho_{i}^{(L)}$ is completely random for Alice and Bob. Thus, the discrimination of the ensemble $\{\tfrac{1}{n},\rho_{i}^{(L)}\}_{i=0}^{n-1}$ occurs.
%%%%%%%%%%%%%%%%%%%%%%

%%%%%%%%%%%%%%%%%%%%%%
From the argument before Eq.~\eqref{eq:dpg1}, the condition $p_{\rm G}(\mathcal{E})=1$ in Eq.~\eqref{eq:pqgc} means that the states $\rho_{0}^{(L)},\ldots,\rho_{n-1}^{(L)}$ are mutually orthogonal.
Thus, we have 
\begin{equation}\label{eq:lrg}
p_{\rm G}(\{\tfrac{1}{n},\rho_{i}^{(L)}\}_{i=0}^{n-1})=1.
\end{equation}
In other words, the data $x$ is accessible when Alice and Bob can collaborate to use global measurements.
%%%%%%%%%%%%%%%%%%%%%%

%%%%%%%%%%%%%%%%%%%%%%
On the other hand, from Theorem~\ref{thm:plmn} in Appendix~\ref{app:thm5}, the condition $q_{\rm G}(\mathcal{E})<\frac{2}{n}$ in Eq.~\eqref{eq:pqgc} implies  that $p_{\rm L}(\{\tfrac{1}{n},\rho_{i}^{(L)}\}_{i=0}^{n-1})$ can be arbitrarily close to $\frac{1}{n}$ with the choice of an appropriately large $L$.
In other words, the globally accessible data $x$ is perfectly hidden asymptotically if Alice and Bob can use only LOCC measurements. Thus, the $n$ quantum states $\rho_{0}^{(L)},\ldots,\rho_{n-1}^{(L)}$ can be used to construct a data-hiding scheme that hides $n$-ary classical data by directly encoding them into the quantum states.
%%%%%%%%%%%%%%%%%%%%%%

%%%%%%%%%%%%%%%%%%%%%%
Our results tell us that any bipartite orthogonal quantum state ensemble $\mathcal{E}$ satisfying $q_{\rm G}(\mathcal{E})<\frac{2}{n}$ is useful in quantum data hiding.
Therefore, it is natural to ask whether $q_{\rm G}(\mathcal{E})<\frac{2}{n}$ is also necessary for $\mathcal{E}$ to be used in quantum data hiding. 
This question is related to the research on the operational meaning of $q_{\rm G}(\mathcal{E})$ in terms of quantum data hiding.
%%%%%%%%%%%%%%%%%%%%%%

%%%%%%%%%%%%%%%%%%%%%%
The results as well as the method of our paper can also be applied to construct a secure information network by extending them to multiparty quantum systems with more than two players.
In particular, it would be interesting in the future to investigate $(m,k)$-threshold $m$-player data-hiding schemes in which the hidden classical data can be recovered if $k$ or more players collaborate.
%%%%%%%%%%%%%%%%%%%%%%

%%%%%%%%%%%%%%%%%%%%%%
%  Acknowledgments   %
%%%%%%%%%%%%%%%%%%%%%%
\section*{Acknowledgments}
This work was supported by a National Research Foundation of Korea(NRF) grant funded by the Korean government (Ministry of Science and ICT; Grant No. NRF2023R1A2C1007039).
%%%%%%%%%%%%%%%%%%%%%%

%%%%%%%%%%%%%%%%%%%%%%
%     Appendix       %
%%%%%%%%%%%%%%%%%%%%%%
\appendix
%%%%%%%%%%%%%%%%%%%%%%
%      Section       %
%%%%%%%%%%%%%%%%%%%%%%
\appsection{Proofs of Theorems~\ref{thm:tsqg} and \ref{thm:ubqg}}\label{app:thm12}
%%%%%%%%%%%%%%%%%%%%%%
\begin{proof}[Proof of Theorem~\ref{thm:tsqg}]
Let $M_{0}$ and $M_{1}$ be the projection operators onto
the non-negative and negative eigenspaces of the Hermitian operator $\eta_{0}\rho_{0}^{\PT}-\eta_{1}\rho_{1}^{\PT}$, respectively, and consider the measurement $\mathcal{M}=\{M_{i}\}_{i=0}^{1}$.
From the completeness $M_{0}+M_{1}=\mathbbm{1}$, we have
\begin{eqnarray}
\sum_{j=0}^{1}\eta_{j}\rho_{j}^{\PT}M_{j}-\eta_{i}\rho_{i}^{\PT}
&=&\eta_{0}\rho_{0}^{\PT}M_{0}
+\eta_{1}\rho_{1}^{\PT}M_{1}-\eta_{i}\rho_{i}^{\PT}(M_{0}+M_{1})
\nonumber\\
&=&(\eta_{0}\rho_{0}^{\PT}
-\eta_{i}\rho_{i}^{\PT})M_{0}
+(\eta_{1}\rho_{1}^{\PT}-\eta_{i}\rho_{i}^{\PT})M_{1}
\nonumber\\
&=&(\eta_{1-i}\rho_{1-i}^{\PT}-\eta_{i}\rho_{i}^{\PT})M_{1-i}\label{eq:tpsc}
\end{eqnarray}
for each $i=0,1$. By using the notation
\begin{equation}\label{eq:epmd}
E^{(\pm)}=\frac{1}{2}(|E|\pm E)
\end{equation}
together with the definition of $\mathcal{M}$, the last term in Eq.~\eqref{eq:tpsc} can be rewritten as
\begin{equation}\label{eq:tpsc2}
(\eta_{1-i}\rho_{1-i}^{\PT}-\eta_{i}\rho_{i}^{\PT})M_{1-i}
=(\eta_{1-i}\rho_{1-i}^{\PT}-\eta_{i}\rho_{i}^{\PT})^{(+)}
\succeq0,
\end{equation}
where the positivity is from $E^{(+)}\succeq0$ for any Hermitian operator $E$.
Equations~\eqref{eq:tpsc} and \eqref{eq:tpsc2} imply that $\mathcal{M}$ satisfies the optimality condition in Proposition~\ref{prop:qgm}.
%%%%%%%%%%%%%%%%%%%%%%

%%%%%%%%%%%%%%%%%%%%%%
Thus, we have
\begin{eqnarray}
q_{\rm G}(\mathcal{E})
&=&\sum_{i=0}^{1}\eta_{i}\Tr(\rho_{i}^{\PT}M_{i})\nonumber\\
&=&\frac{1}{2}\sum_{i=0}^{1}\Tr\Big(\sum_{j=0}^{1}\eta_{j}\rho_{j}^{\PT}M_{j}-\eta_{i}\rho_{i}^{\PT}+\eta_{i}\rho_{i}^{\PT}\Big)\nonumber\\
&=&\frac{1}{2}+\frac{1}{2}\sum_{i=0}^{1}\Tr(
\eta_{1-i}\rho_{1-i}^{\PT}-\eta_{i}\rho_{i}^{\PT})^{(+)}\nonumber\\
&=&\frac{1}{2}+\frac{1}{2}\Tr|\eta_{0}\rho_{0}^{\PT}-\eta_{1}\rho_{1}^{\PT}|,\label{eq:vqg}
\end{eqnarray}
where the first equality is from Proposition~\ref{prop:qgm},
the third equality is due to Eqs.~\eqref{eq:tpsc}
and \eqref{eq:tpsc2} together with $\eta_{0}+\eta_{1}=1$, and the last equality is from $E^{(-)}=(-E)^{(+)}$ and $|E|=E^{(+)}+E^{(-)}$
for any Hermitian operator $E$.
\end{proof}
%%%%%%%%%%%%%%%%%%%%%%
\begin{proof}[Proof of Theorem~\ref{thm:ubqg}]
For the optimal measurement $\mathcal{M}=\{M_{i}\}_{i=0}^{n-1}$ providing $q_{\rm G}(\mathcal{E})$, we have
\begin{equation}
\Tr H-q_{\rm G}(\mathcal{E})=\sum_{i=0}^{n-1}\Tr[(H-\eta_{i}\rho_{i}^{\PT})M_{i}]
\geqslant0,
\end{equation}
where the equality is from $\sum_{i=0}^{n-1}M_{i}=\mathbbm{1}$ and the definition of $q_{\rm G}(\mathcal{E})$ in Eq.~\eqref{eq:qge}
and the inequality is from condition~\eqref{eq:ubc} and $M_{i}\succeq0$ for all $i=0,\ldots,n-1$. Thus, $\Tr H$ is an upper bound of $q_{\rm G}(\mathcal{E})$.
\end{proof}
%%%%%%%%%%%%%%%%%%%%%%

%%%%%%%%%%%%%%%%%%%%%%
%      Section       %
%%%%%%%%%%%%%%%%%%%%%%
\appsection{Proof of Theorem~\ref{thm:tsege}}\label{app:thm3}
%%%%%%%%%%%%%%%%%%%%%%
From the definitions of  $\eta_{i}^{(L)}$ and $\rho_{i}^{(L)}$ in Eq.~\eqref{eq:rhil}, we can see that
\begin{eqnarray}
\eta_{i}^{(L)}\rho_{i}^{(L)}&=&\frac{1}{2}\big[(\eta_{0}\rho_{0}+\eta_{1}\rho_{1})^{\otimes L}\nonumber\\
&&+(-1)^{i}(\eta_{0}\rho_{0}-\eta_{1}\rho_{1})^{\otimes L}\big]\label{eq:er01}
\end{eqnarray}
for each $i=0,1$, which implies
\begin{eqnarray}\label{eq:tsr}
\eta_{0}^{(L)}\rho_{0}^{(L)\PT}-\eta_{1}^{(L)}\rho_{1}^{(L)\PT}
=(\eta_{0}\rho_{0}^{\PT}-\eta_{1}\rho_{1}^{\PT})^{\otimes L}.
\end{eqnarray}
For the two-state ensemble $\mathcal{E}^{(L)}=\{\eta_{i}^{(L)},\rho_{i}^{(L)}\}_{i=0}^{1}$, we have
\begin{eqnarray}\label{eq:tqer}
q_{\rm G}(\mathcal{E}^{(L)})
&=&\frac{1}{2}+\frac{1}{2}\Tr\big|\eta_{0}^{(L)}\rho_{0}^{(L)\PT}-\eta_{1}^{(L)}\rho_{1}^{(L)\PT}\big|\nonumber\\
&=&
\frac{1}{2}+\frac{1}{2}\Tr\big|(\eta_{0}\rho_{0}^{\PT}-\eta_{1}\rho_{1}^{\PT})^{\otimes L}\big|
\nonumber\\
&=&\frac{1}{2}+\frac{1}{2}\big(\Tr\big|\eta_{0}\rho_{0}^{\PT}-\eta_{1}\rho_{1}^{\PT}\big|\big)^{L},
\end{eqnarray}
where the first equality is from Theorem~\ref{thm:tsqg},
the second equality is from Eq.~\eqref{eq:tsr}, and
the last equality is due to $|E^{\otimes L}|=|E|^{\otimes L}$ for any Hermitian operator $E$. 
%%%%%%%%%%%%%%%%%%%%%%

%%%%%%%%%%%%%%%%%%%%%%
Now, let us assume $p_{\rm L}(\mathcal{E})=q_{\rm G}(\mathcal{E})$. 
To prove Eq.~\eqref{eq:plqgl}, it is enough to show 
\begin{equation}\label{eq:pllw}
p_{\rm L}(\mathcal{E}^{(L)})\geqslant q_{\rm G}(\mathcal{E}^{(L)})
\end{equation}
because $p_{\rm L}(\mathcal{E}^{(L)})\leqslant q_{\rm G}(\mathcal{E}^{(L)})$ due to inequality~\eqref{eq:upb}.
%%%%%%%%%%%%%%%%%%%%%%

%%%%%%%%%%%%%%%%%%%%%%
Let $\mathcal{M}=\{M_{i}\}_{i=0}^{1}$ be a LOCC measurement realizing $p_{\rm L}(\mathcal{E})$, and consider the measurement $\mathcal{M}^{(L)}=\{M_{i}^{(L)}\}_{i=0}^{1}$ with
\begin{equation}
M_{i}^{(L)}=\sum_{\substack{\vec{c}\in\mathbb{Z}_{2}^{L}\\ \omega_{2}(\vec{c})=i}}\bigotimes_{l=1}^{L}M_{c_{l}}.
\end{equation}
Note that $\mathcal{M}^{(L)}$ is a LOCC measurement because it can be realized by performing the same LOCC measurement $\mathcal{M}$ for each of $L$ subsystems. 
%%%%%%%%%%%%%%%%%%%%%%

%%%%%%%%%%%%%%%%%%%%%%
From the definition of $\omega_{2}(\vec{c})$ in Eq.~\eqref{eq:mns}, we have
\begin{equation}
M_{i}^{(L)}=\frac{1}{2}\big[(M_{0}+M_{1})^{\otimes L}
+(-1)^{i}(M_{0}-M_{1})^{\otimes L}\big]\label{eq:wm01}
\end{equation}
for all $i=0,1$.
Now, we note that the success probability of discriminating the states from $\mathcal{E}^{(L)}$
using $\mathcal{M}^{(L)}$ is equal to $q_{\rm G}(\mathcal{E}^{(L)})$ because
\begin{eqnarray}
\sum_{i=0}^{1}\eta_{i}^{(L)}\Tr(\rho_{i}^{(L)}M_{i}^{(L)})
&=&\frac{1}{2}\Tr\Big[(\eta_{0}\rho_{0}+\eta_{1}\rho_{1})^{\otimes L}(M_{0}+M_{1})^{\otimes L}\Big]
+\frac{1}{2}\Tr\Big[(\eta_{0}\rho_{0}-\eta_{1}\rho_{1})^{\otimes L}(M_{0}-M_{1})^{\otimes L}\Big]\nonumber\\
&=&\frac{1}{2}\Tr\Big[\sum_{i=0}^{1}\eta_{i}\rho_{i}\Big]^{\otimes L}+\frac{1}{2}\Tr\Big[\sum_{i=0}^{1}\eta_{i}\rho_{i}(2M_{i}-\mathbbm{1})\Big]^{\otimes L}\nonumber\\
&=&\frac{1}{2}+\frac{1}{2}\Big[2\sum_{i=0}^{1}\eta_{i}\Tr(\rho_{i}M_{i})-1\Big]^{L}
=\frac{1}{2}+\frac{1}{2}\big(2q_{\rm G}(\mathcal{E})-1\big)^{L}
=q_{\rm G}(\mathcal{E}^{(L)}),\label{eq:apcg}
\end{eqnarray}
where the first equality is from Eqs.~\eqref{eq:er01} and \eqref{eq:wm01}, the second equality is due to $M_{0}+M_{1}=\mathbbm{1}$, the third equality is from $\sum_{i=0}^{1}\Tr(\eta_{i}\rho_{i})=1$, the fourth equality follows from the assumption of $\mathcal{M}$ and $p_{\rm L}(\mathcal{E})=q_{\rm G}(\mathcal{E})$, and the last equality is from Theorem~\ref{thm:tsqg} together with Eq.~\eqref{eq:tqer}.
The success probability in Eq.~\eqref{eq:apcg} is obviously a lower bound of $p_{\rm L}(\mathcal{E}^{(L)})$; therefore, inequality~\eqref{eq:pllw} holds.
%%%%%%%%%%%%%%%%%%%%%%

%%%%%%%%%%%%%%%%%%%%%%
%      Section       %
%%%%%%%%%%%%%%%%%%%%%%
\appsection{Proof of Theorem~\ref{thm:lqge}}\label{app:thm4}
%%%%%%%%%%%%%%%%%%%%%%
Let $\mathcal{M}=\{M_{i}\}_{i=0}^{n-1}$ be a measurement providing $q_{\rm G}(\mathcal{E})$, and consider
\begin{equation}\label{eq:gam}
\Gamma=\sum_{i=0}^{n-1}\eta_{i}\rho_{i}^{\PT},~
H=\sum_{i=0}^{n-1}\eta_{i}\rho_{i}^{\PT}M_{i}.
\end{equation}
From Proposition~\ref{prop:qgm}, we have
\begin{equation}\label{eq:hept0}
H-\eta_{i}\rho_{i}^{\PT}=\sum_{j=0}^{n-1}\eta_{j}\rho_{j}^{\PT}M_{j}-\eta_{i}\rho_{i}^{\PT}\succeq0 
\end{equation}
for all $i=0,\ldots,n-1$. Let 
\begin{equation}\label{eq:wid}
W_{i}=H-\eta_{i}\rho_{i}^{\PT}\succeq0
\end{equation}
for each $i=0,\ldots,n-1$, and define the positive semidefinite operator 
\begin{equation}\label{eq:wri}
W_{\vec{c}}=\bigotimes_{l=1}^{L}W_{c_{l}}\succeq0
\end{equation}
for each $\vec{c}=(c_{1},\ldots,c_{L})\in\mathbb{Z}_{n}^{L}$ with the $l$th coordinate $c_{l}$.
%%%%%%%%%%%%%%%%%%%%%%

%%%%%%%%%%%%%%%%%%%%%%
We first note that for each $i=0,\ldots,n-1$,
\begin{equation}\label{eq:hlii}
\sum_{\substack{\vec{c}\in\mathbb{Z}_{n}^{L}\\ \omega_{n}(\vec{c})= i}}W_{\vec{c}}
+\sum_{\substack{\vec{c}\in\mathbb{Z}_{n}^{L}\\ \omega_{n}(\vec{c})\neq i}}W_{\vec{c}}
=\Big(\sum_{j=0}^{n-1}W_{j}\Big)^{\otimes L}
=(nH-\Gamma)^{\otimes L},
\end{equation}
where the first equality is from the definition of $W_{\vec{c}}$ in Eq.~\eqref{eq:wri} and the last equality is from Eqs.~\eqref{eq:gam} and \eqref{eq:wid}.
%%%%%%%%%%%%%%%%%%%%%%

%%%%%%%%%%%%%%%%%%%%%%
\begin{lemma}\label{lem:tpl}
For a bipartite quantum state ensemble $\mathcal{E}=\{\eta_{i},\rho_{i}\}_{i=0}^{n-1}$ and for any positive integer $L$,
\begin{equation}\label{eq:hlde}
\sum_{\substack{\vec{c}\in\mathbb{Z}_{n}^{L}\\ \omega_{n}(\vec{c})= i}}W_{\vec{c}}-(-1)^{L}\eta_{i}^{(L)}\rho_{i}^{(L)\PT}
=\frac{1}{n}(nH-\Gamma)^{\otimes L}-(-1)^{L}\frac{1}{n}\Gamma^{\otimes L},
\end{equation}
where $H$ and $\Gamma$ are from Eq.~\eqref{eq:gam},
$W_{\vec{c}}$ is from Eq.~\eqref{eq:wri},
and $\eta_{i}^{(L)}$ and $\rho_{i}^{(L)}$ are from Eq.~\eqref{eq:rhil}.
\end{lemma}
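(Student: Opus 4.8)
The plan is to treat the constrained sum over $\{\vec c\in\mathbb Z_n^L:\omega_n(\vec c)=i\}$ by a discrete Fourier (root-of-unity) decomposition, which will isolate the stated right-hand side as a single Fourier mode. Writing $A_j=\eta_j\rho_j^{\PT}$ as shorthand, I first record the three facts I will use: $W_j=H-A_j$ and $\sum_{j=0}^{n-1}W_j=nH-\Gamma$ by Eqs.~\eqref{eq:gam} and \eqref{eq:wid}; the factorization $\eta_{\vec c}\rho_{\vec c}^{\PT}=\bigotimes_{l=1}^L A_{c_l}$, which follows because partial transposition distributes over the tensor product of the bipartite factors; and, applying $\PT$ to Eq.~\eqref{eq:rhil}, the identity $\eta_i^{(L)}\rho_i^{(L)\PT}=\sum_{\omega_n(\vec c)=i}\bigotimes_{l=1}^L A_{c_l}$. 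Thus both sides of Eq.~\eqref{eq:hlde} are built from constrained sums of tensor products of the operators $W_j$ and $A_j$.

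Next I would introduce a primitive $n$th root of unity $\zeta$ and use the identity $\tfrac1n\sum_{k=0}^{n-1}\zeta^{k(\omega_n(\vec c)-i)}$, which equals $1$ when $\omega_n(\vec c)=i$ and $0$ otherwise. Substituting this indicator and using $\omega_n(\vec c)\equiv\sum_l c_l\pmod n$ to factor $\zeta^{k\omega_n(\vec c)}=\prod_l\zeta^{kc_l}$, each constrained sum collapses into a sum of $L$th tensor powers,
\[
\sum_{\omega_n(\vec c)=i}\bigotimes_{l}X_{c_l}
=\frac1n\sum_{k=0}^{n-1}\zeta^{-ki}\Big(\sum_{j=0}^{n-1}\zeta^{kj}X_j\Big)^{\!\otimes L},
\]
valid for $X=W$ and $X=A$ alike. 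The $k=0$ term reproduces exactly the right-hand side of Eq.~\eqref{eq:hlde}: it contributes $\tfrac1n(nH-\Gamma)^{\otimes L}$ from the $W$-sum and $\tfrac1n\Gamma^{\otimes L}$ from the $A$-sum.

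It then remains to show that every mode $k\neq0$ cancels in the combination $\sum_{\omega_n(\vec c)=i}W_{\vec c}-(-1)^L\eta_i^{(L)}\rho_i^{(L)\PT}$. Here the key observation is that for $k\neq0$ we have $\sum_{j=0}^{n-1}\zeta^{kj}=0$, so that $\sum_j\zeta^{kj}W_j=\sum_j\zeta^{kj}(H-A_j)=-\sum_j\zeta^{kj}A_j$; raising to the $L$th tensor power gives $(\sum_j\zeta^{kj}W_j)^{\otimes L}=(-1)^L(\sum_j\zeta^{kj}A_j)^{\otimes L}$, so the $W$-mode and $(-1)^L$ times the $A$-mode coincide and drop out term by term. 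Hence only the $k=0$ mode survives, which is precisely Eq.~\eqref{eq:hlde}.

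I expect the only real subtlety to be bookkeeping rather than a genuine obstacle: one must be careful that partial transposition factorizes through the tensor product (so that $A_{\vec c}=\bigotimes_l A_{c_l}$) and that the complex coefficients $\zeta^{kj}$ cause no trouble, which they do not, since each nonzero mode already vanishes as an operator identity, leaving a manifestly Hermitian result. Spotting the root-of-unity decomposition is the one nonroutine step; once it is in place, the cancellation via $W_j=H-A_j$ and $\sum_j\zeta^{kj}=0$ is immediate.
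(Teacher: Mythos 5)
Your proof is correct, and it takes a genuinely different route from the paper. You resolve the constrained sum over $\{\vec c:\omega_n(\vec c)=i\}$ with a root-of-unity (discrete Fourier) decomposition, observe that the $k=0$ mode of the $W$-sum and of $\eta_i^{(L)}\rho_i^{(L)\PT}$ gives exactly $\tfrac1n(nH-\Gamma)^{\otimes L}$ and $\tfrac1n\Gamma^{\otimes L}$, and then kill every $k\neq0$ mode using $\sum_j\zeta^{kj}=0$, which forces $\sum_j\zeta^{kj}W_j=-\sum_j\zeta^{kj}A_j$ and hence an exact cancellation against the factor $(-1)^L$. All the ingredients check out: the indicator identity $\tfrac1n\sum_k\zeta^{k(\omega_n(\vec c)-i)}=\delta_{\omega_n(\vec c),i}$, the factorization $\eta_i^{(L)}\rho_i^{(L)\PT}=\sum_{\omega_n(\vec c)=i}\bigotimes_l\eta_{c_l}\rho_{c_l}^{\PT}$ via the distributivity of partial transposition over tensor products (the same fact the paper uses implicitly in Eq.~\eqref{eq:tsr}), and the mode-by-mode cancellation. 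The paper instead uses the telescoping identity \eqref{eq:abl} twice: once to expand $\bigotimes_l W_{c_l}-\bigotimes_l(-\eta_{c_l}\rho_{c_l}^{\PT})$ into a sum of mixed tensor products with a single factor $H=W_{c_l}-(-\eta_{c_l}\rho_{c_l}^{\PT})$ inserted, then again to collapse $\sum_k(nH-\Gamma)^{\otimes L-k}\otimes H\otimes(-\Gamma)^{\otimes k-1}$ back into the stated right-hand side. Your approach is arguably more transparent about \emph{why} the specific combination with the sign $(-1)^L$ is the natural one (it is precisely what makes the nonzero Fourier modes cancel), and it avoids the somewhat delicate bookkeeping of the index set $\{\vec c:\omega_n(\vec c)=i\}$ under the telescoping; the paper's argument is more elementary in that it stays entirely within real/Hermitian operator algebra and never introduces complex coefficients. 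Both are complete proofs of the lemma.
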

%%%%%%%%%%%%%%%%%%%%%%

%%%%%%%%%%%%%%%%%%%%%%
\begin{proof}[Proof of Lemma of \ref{lem:tpl}]
We first note the following relation
\begin{equation}\label{eq:abl}
\bigotimes_{l=1}^{L}A_{l}-\bigotimes_{l=1}^{L}B_{l}
=\sum_{k=1}^{L}A_{1}\otimes\cdots\otimes A_{L-k}\otimes(A_{L-k+1}-B_{L-k+1})
\otimes B_{L-k+2}\otimes\cdots\otimes B_{L}
\end{equation}
for any operators $A_{l}$ and $B_{l}$.
For each $i=0,\ldots,n-1$, we can see that
\begin{eqnarray}\label{eq:wnh1}
\sum_{\substack{\vec{c}\in\mathbb{Z}_{n}^{L}\\ \omega_{n}(\vec{c})= i}}W_{\vec{c}}-(-1)^{L}\eta_{i}^{(L)}\rho_{i}^{(L)\PT}
&=&\sum_{\substack{\vec{c}\in\mathbb{Z}_{n}^{L}\\ \omega_{n}(\vec{c})= i}}\Bigg[\bigotimes_{l=1}^{L}W_{c_{l}}-\bigotimes_{l=1}^{L}(-\eta_{c_{l}}\rho_{c_{l}}^{\PT})\Bigg]\nonumber\\
&=&\sum_{\substack{\vec{c}\in\mathbb{Z}_{n}^{L}\\ \omega_{n}(\vec{c})= i}}\sum_{k=1}^{L}
W_{c_{1}}\otimes\cdots\otimes W_{c_{L-k}}\otimes H
\otimes(-\eta_{c_{L-k+2}}\rho_{c_{L-k+2}}^{\PT})\otimes\cdots\otimes(-\eta_{c_{L}}\rho_{c_{L}}^{\PT})\nonumber\\
&=&\sum_{k=1}^{L}(\sum_{j\in\mathbb{Z}_{n}}W_{j})^{\otimes L-k}\otimes H
\otimes\Big[\sum_{j\in\mathbb{Z}_{n}}(-\eta_{j}\rho_{j}^{\PT})\Big]^{\otimes k-1},
\end{eqnarray}
where the first equality is due to Eqs.~\eqref{eq:rhil} and \eqref{eq:wri} and the second equality is from Eq.~\eqref{eq:abl} together with Eq.~\eqref{eq:wid}. The last equality in Eq.~\eqref{eq:wnh1} holds because
\begin{eqnarray}
\{\vec{c}\in\mathbb{Z}_{n}^{L}\,|\,
\omega_{n}(\vec{c})=i\}
=\{(c_{1},\ldots,c_{L})&|&c_{1}\in\mathbb{Z}_{n},\ldots,
c_{L-k}\in\mathbb{Z}_{n},\,
c_{L-k+2}\in\mathbb{Z}_{n},\ldots,c_{L}\in\mathbb{Z}_{n},\nonumber\\
&&c_{L-k+1}=i-\mbox{$\sum_{l=1,l\neq L-k+1}^{L}c_{l}$}
~\mbox{mod}~n\,\}
\end{eqnarray}
for all $i=0,\ldots,n-1$ and all $k=1,\ldots,L$.
%%%%%%%%%%%%%%%%%%%%%%

%%%%%%%%%%%%%%%%%%%%%%
The last term in Eq.~\eqref{eq:wnh1} can be rewritten as
\begin{eqnarray}
\sum_{k=1}^{L}(\sum_{j\in\mathbb{Z}_{n}}W_{j})^{\otimes L-k}\otimes H
\otimes\Big[\sum_{j\in\mathbb{Z}_{n}}(-\eta_{j}\rho_{j}^{\PT})\Big]^{\otimes k-1}
&=&\sum_{k=1}^{L}(n H-\Gamma)^{\otimes L-k}\otimes H
\otimes(-\Gamma)^{\otimes k-1} \nonumber\\
&=&\frac{1}{n}\sum_{k=1}^{L}(n H-\Gamma)^{\otimes L-k}\otimes \big[nH-\Gamma-(-\Gamma)\big]
\otimes(-\Gamma)^{\otimes k-1} \nonumber\\
&=&\frac{1}{n}\Big[(nH-\Gamma)^{\otimes L}-(-\Gamma)^{\otimes L}\Big],
\label{eq:wnh2}
\end{eqnarray}
where the first equality is from Eqs.~\eqref{eq:gam} and \eqref{eq:wid} and the last equality is due to Eq.~\eqref{eq:abl}.
Equations~\eqref{eq:wnh1} and \eqref{eq:wnh2} lead us to Eq.~\eqref{eq:hlde}.
\end{proof}
%%%%%%%%%%%%%%%%%%%%%%

%%%%%%%%%%%%%%%%%%%%%%
Now, let us consider
\begin{equation}\label{eq:hnl}
H^{(L)}=
\begin{dcases}
\tfrac{1}{n}\Gamma^{\otimes L}+\tfrac{1}{n}(n H-\Gamma)^{\otimes L},&L:\mathrm{odd},\\
\tfrac{1}{n}\Gamma^{\otimes L}+\tfrac{n-1}{n}(n H-\Gamma)^{\otimes L},&L:\mathrm{even}.
\end{dcases}
\end{equation}
From Lemma~\ref{lem:tpl} together with Eq.~\eqref{eq:hlii}, we have
\begin{equation}
H^{(L)}-\eta_{i}^{(L)}\rho_{i}^{(L)\PT}=
\begin{dcases}
\sum_{\substack{\vec{c}\in\mathbb{Z}_{n}^{L}\\ \omega_{n}(\vec{c})= i}}W_{\vec{c}},&L:\mathrm{odd},\\
\sum_{\substack{\vec{c}\in\mathbb{Z}_{n}^{L}\\ \omega_{n}(\vec{c})\neq i}}W_{\vec{c}},&L:\mathrm{even},
\end{dcases}
\label{eq:hldc}
\end{equation}
for all $i=0,\ldots,n-1$.
%%%%%%%%%%%%%%%%%%%%%%

%%%%%%%%%%%%%%%%%%%%%%
Due to the non-negativity of $W_{\vec{c}}$ in \eqref{eq:wri}, Eq.~\eqref{eq:hldc} implies
\begin{equation}\label{eq:hlic}
H^{(L)}-\eta_{i}^{(L)}\rho_{i}^{(L)\PT}\succeq0.
\end{equation}
In other words, $H^{(L)}$ satisfies the condition of Theorem~\ref{thm:ubqg} for the ensemble $\mathcal{E}^{(L)}$ in Eq.~\eqref{eq:eld}; therefore, its trace provides an upper bound of $q_{\rm G}(\mathcal{E}^{(L)})$,
\begin{equation}\label{eq:hlqg}
q_{\rm G}(\mathcal{E}^{(L)})\leqslant \Tr H^{(L)}.
\end{equation}
As $\Tr\Gamma=1$ and $\Tr H=q_{\rm G}(\mathcal{E})$, we have
\begin{eqnarray}
\Tr H^{(L)}
&=&\begin{dcases}
\tfrac{1}{n}+\tfrac{1}{n}\big[n q_{\rm G}(\mathcal{E})-1\big]^{L},&L:\mathrm{odd},\\
\tfrac{1}{n}+\tfrac{n-1}{n}\big[n q_{\rm G}(\mathcal{E})-1\big]^{L},&L:\mathrm{even},
\end{dcases}
\nonumber\\[3mm]
&\leqslant&\frac{1}{n}+\frac{n-1}{n}\big[n q_{\rm G}(\mathcal{E})-1\big]^{L},\label{eq:trhl}
\end{eqnarray}
where the equality is due to Eq.~\eqref{eq:hnl} and the inequality is from the non-negativity of $n q_{\rm G}(\mathcal{E})-1$. Inequalities~\eqref{eq:hlqg} and \eqref{eq:trhl} lead us to inequality~\eqref{eq:qgel}, which completes the proof.
%%%%%%%%%%%%%%%%%%%%%%

%%%%%%%%%%%%%%%%%%%%%%
%      Section       %
%%%%%%%%%%%%%%%%%%%%%%
\appsection{Convergence of $\bm{p_{\rm L}(\{\frac{1}{n},\rho_{i}^{(L)}\}_{i=0}^{n-1})}$}\label{app:thm5}
%%%%%%%%%%%%%%%%%%%%%%
%      Theorem       %
%%%%%%%%%%%%%%%%%%%%%%
\begin{theorem}\label{thm:plmn}
For a bipartite quantum state ensemble $\mathcal{E}=\{\eta_{i},\rho_{i}\}_{i=0}^{n-1}$ with
$q_{\rm G}(\mathcal{E})<\frac{2}{n}$,
\begin{equation}\label{eq:plmn}
\lim_{L\rightarrow\infty}p_{\rm L}(\{\tfrac{1}{n},\rho_{i}^{(L)}\}_{i=0}^{n-1})=\frac{1}{n}.
\end{equation}
\end{theorem}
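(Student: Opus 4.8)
The plan is to sandwich $p_{\rm L}(\{\tfrac{1}{n},\rho_{i}^{(L)}\}_{i=0}^{n-1})$ between two quantities that both tend to $\tfrac{1}{n}$. The lower bound is immediate from inequality~\eqref{eq:inq} applied to the uniform ensemble: since every prior equals $\tfrac{1}{n}$, we get $\tfrac{1}{n}\leqslant p_{\rm L}(\{\tfrac{1}{n},\rho_{i}^{(L)}\}_{i=0}^{n-1})$ for every $L$. The whole work is therefore to produce a matching upper bound that decreases to $\tfrac{1}{n}$, and I would obtain it by comparing the uniform ensemble with $\mathcal{E}^{(L)}=\{\eta_{i}^{(L)},\rho_{i}^{(L)}\}_{i=0}^{n-1}$, which is already controlled through Theorem~\ref{thm:lqge} and Corollary~\ref{cor:lim}.

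The comparison rests on a prior-rescaling inequality. For any fixed measurement $\mathcal{M}=\{M_{i}\}_{i=0}^{n-1}$, each term $\eta_{i}^{(L)}\Tr(\rho_{i}^{(L)}M_{i})$ is non-negative, so
\begin{equation}
\sum_{i=0}^{n-1}\tfrac{1}{n}\Tr(\rho_{i}^{(L)}M_{i})
=\sum_{i=0}^{n-1}\frac{1/n}{\eta_{i}^{(L)}}\eta_{i}^{(L)}\Tr(\rho_{i}^{(L)}M_{i})
\leqslant\frac{1}{n\min_{i}\eta_{i}^{(L)}}\sum_{i=0}^{n-1}\eta_{i}^{(L)}\Tr(\rho_{i}^{(L)}M_{i}).
\end{equation}
Because this holds for every LOCC measurement, taking the maximum over LOCC measurements on both sides yields $p_{\rm L}(\{\tfrac{1}{n},\rho_{i}^{(L)}\}_{i=0}^{n-1})\leqslant p_{\rm L}(\mathcal{E}^{(L)})/(n\min_{i}\eta_{i}^{(L)})$. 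The same manipulation works verbatim for $q_{\rm G}$, so one may alternatively bound through $q_{\rm G}(\mathcal{E}^{(L)})$ and Theorem~\ref{thm:lqge}.

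It then remains to show that both factors behave correctly as $L\rightarrow\infty$. Corollary~\ref{cor:lim} already gives $p_{\rm L}(\mathcal{E}^{(L)})\rightarrow\tfrac{1}{n}$ under the hypothesis $q_{\rm G}(\mathcal{E})<\tfrac{2}{n}$, so the main obstacle is to prove the empirical priors equidistribute, i.e. $\min_{i}\eta_{i}^{(L)}\rightarrow\tfrac{1}{n}$. Here the hypothesis enters decisively. Combining inequalities~\eqref{eq:inq} and \eqref{eq:upb} gives $\max_{i}\eta_{i}\leqslant p_{\rm L}(\mathcal{E})\leqslant q_{\rm G}(\mathcal{E})<\tfrac{2}{n}$, so no index carries probability $\tfrac{2}{n}$ or more; consequently $(\eta_{0},\ldots,\eta_{n-1})$ cannot be supported on a coset of a proper subgroup of $\mathbb{Z}_{n}$, since such a coset has at most $n/2$ elements and would force the total mass below $(n/2)(2/n)=1$. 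Writing the modulo-$n$ random walk through its discrete Fourier transform, $\eta_{i}^{(L)}=\tfrac{1}{n}\sum_{k=0}^{n-1}e^{-2\pi\mathrm{i}ki/n}\hat\eta(k)^{L}$ with $\hat\eta(k)=\sum_{j}\eta_{j}e^{2\pi\mathrm{i}kj/n}$, this non-concentration is exactly the statement that $|\hat\eta(k)|<1$ for every $k\neq0$; hence $\hat\eta(k)^{L}\rightarrow0$ and $\eta_{i}^{(L)}\rightarrow\tfrac{1}{n}$ for each $i$. The same non-concentration also guarantees $\eta_{i}^{(L)}>0$ for all large $L$, so the states $\rho_{i}^{(L)}$ and the rescaling are well defined.

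Putting the pieces together, the upper bound $p_{\rm L}(\mathcal{E}^{(L)})/(n\min_{i}\eta_{i}^{(L)})$ converges to $(1/n)/(n\cdot 1/n)=\tfrac{1}{n}$, which with the lower bound $\tfrac{1}{n}$ forces $p_{\rm L}(\{\tfrac{1}{n},\rho_{i}^{(L)}\}_{i=0}^{n-1})\rightarrow\tfrac{1}{n}$ by a squeeze. I expect the equidistribution step to be the only delicate point; everything else is a direct application of bounds already in the paper. The crucial conceptual content is that $q_{\rm G}(\mathcal{E})<\tfrac{2}{n}$ does double duty: it controls the LOCC distinguishability of $\mathcal{E}^{(L)}$ and simultaneously forbids the prior from concentrating, which is precisely what makes the uniform-prior limit equal $\tfrac{1}{n}$.
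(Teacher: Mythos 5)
Your proposal is correct, but it reaches the limit by a genuinely different route from the paper. Both arguments share the same skeleton (lower bound $\tfrac{1}{n}$ from inequality~\eqref{eq:inq}, then an upper bound obtained by comparing the uniform ensemble $\{\tfrac{1}{n},\rho_{i}^{(L)}\}$ with $\mathcal{E}^{(L)}$), but the comparison and the control of the priors differ. The paper uses an \emph{additive} comparison, $p_{\rm L}(\{\tfrac{1}{n},\rho_{i}^{(L)}\})\leqslant p_{\rm L}(\mathcal{E}^{(L)})+\tfrac{1}{2}\sum_{i}|\tfrac{1}{n}-\eta_{i}^{(L)}|$, and then controls the total-variation term by running the chain $\max_{i}\eta_{i}^{(L)}\leqslant p_{\rm L}(\mathcal{E}^{(L)})\leqslant q_{\rm G}(\mathcal{E}^{(L)})\leqslant\tfrac{1}{n}+\tfrac{n-1}{n}[nq_{\rm G}(\mathcal{E})-1]^{L}$ at the level of $\mathcal{E}^{(L)}$ itself, i.e., it reuses Theorem~\ref{thm:lqge} to prove that the priors equidistribute, which yields the fully explicit bound~\eqref{eq:atie} with rate $[nq_{\rm G}(\mathcal{E})-1]^{L}$. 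You instead use a \emph{multiplicative} rescaling, $p_{\rm L}(\{\tfrac{1}{n},\rho_{i}^{(L)}\})\leqslant p_{\rm L}(\mathcal{E}^{(L)})/(n\min_{i}\eta_{i}^{(L)})$, and prove the equidistribution $\eta_{i}^{(L)}\to\tfrac{1}{n}$ by classical Fourier analysis of the convolution powers of $\eta$ on $\mathbb{Z}_{n}$: the hypothesis gives $\max_{i}\eta_{i}\leqslant q_{\rm G}(\mathcal{E})<\tfrac{2}{n}$, which correctly rules out support on a coset of a proper subgroup and hence forces $|\hat\eta(k)|<1$ for $k\neq0$. Your coset/character argument is sound (equality $|\hat\eta(k)|=1$ for $k\neq0$ is indeed equivalent to coset-supported priors, and a coset of a proper subgroup has at most $n/2$ elements), and the squeeze then closes the proof. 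What your route buys is a more elementary, self-contained treatment of the priors that does not route their convergence through the $q_{\rm G}$ machinery; what it gives up is the uniform explicit rate: your convergence speed is governed by $\max_{k\neq0}|\hat\eta(k)|^{L}$ rather than $[nq_{\rm G}(\mathcal{E})-1]^{L}$, so it does not directly reproduce the paper's quantitative bound~\eqref{eq:atie} (which the paper uses to advertise exponentially fast hiding). A minor point both treatments share: one needs $\eta_{i}^{(L)}>0$ for $\rho_{i}^{(L)}$ to be defined, which you correctly note holds for all sufficiently large $L$ and which suffices for the limit statement.
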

%%%%%%%%%%%%%%%%%%%%%%
\begin{proof}
For the quantum states $\rho_{0}^{(L)},\ldots,\rho_{n-1}^{(L)}$ in Eq.~\eqref{eq:rhil}, we will show that
\begin{equation}\label{eq:atie}
p_{\rm L}(\{\tfrac{1}{n},\rho_{i}^{(L)}\}_{i=0}^{n-1})
\leqslant\frac{1}{n}+\frac{(n-1)(n^{2}-n+2)}{2n}\big[n q_{\rm G}(\mathcal{E})-1\big]^{L}.
\end{equation}
Because $q_{\rm G}(\mathcal{E})<\frac{2}{n}$,
the right-hand side of inequality~\eqref{eq:atie} converges to $\tfrac{1}{n}$ as $L$ gets larger.
Moreover, $p_{\rm L}(\{\tfrac{1}{n},\rho_{i}^{(L)}\}_{i=0}^{n-1})$ is bounded below by $\frac{1}{n}$ from inequality~\eqref{eq:inq}. Therefore, inequality~\eqref{eq:atie} leads us to
Eq.~\eqref{eq:plmn}.
%%%%%%%%%%%%%%%%%%%%%%

%%%%%%%%%%%%%%%%%%%%%%
Let $\mathcal{M}=\{M_{i}\}_{i=0}^{n-1}$ be a LOCC measurement realizing $p_{\rm L}(\{\tfrac{1}{n},\rho_{i}^{(L)}\}_{i=0}^{n-1})$.
From the definition of $p_{\rm L}(\mathcal{E}^{(L)})$ in Eq.~\eqref{eq:ple}, we have
\begin{equation}
p_{\rm L}(\{\tfrac{1}{n},\rho_{i}^{(L)}\}_{i=0}^{n-1})
-p_{\rm L}(\mathcal{E}^{(L)})
=\sum_{i=0}^{n-1}\frac{1}{n}\Tr(\rho_{i}^{(L)}M_{i})
-p_{\rm L}(\mathcal{E}^{(L)})
\leqslant\sum_{i=0}^{n-1}\Big(\frac{1}{n}-\eta_{i}^{(L)}\Big)\Tr(\rho_{i}^{(L)}M_{i}).\label{eq:ppd}
\end{equation}
%%%%%%%%%%%%%%%%%%%%%%

%%%%%%%%%%%%%%%%%%%%%%
Since $0\leqslant\Tr(\rho_{i}^{(L)}M_{i})\leqslant 1$ for any $i\in\{0,\ldots,n-1\}$ and $r\leqslant\frac{1}{2}(|r|+r)$ for any real number $r$, we have
\begin{equation}\label{eq:phe}
\sum_{i=0}^{n-1}\Big(\frac{1}{n}-\eta_{i}^{(L)}\Big)\Tr(\rho_{i}^{(L)}M_{i})
\leqslant\sum_{i=0}^{n-1}\frac{1}{2}\Bigg[\Big|\frac{1}{n}-\eta_{i}^{(L)}\Big|+\Big(\frac{1}{n}-\eta_{i}^{(L)}\Big)\Bigg]
=\frac{1}{2}\sum_{i=0}^{n-1}\Big|\frac{1}{n}-\eta_{i}^{(L)}\Big|,
\end{equation}
where the last equality holds from $\sum_{i=0}^{n-1}\eta_{i}^{(L)}=1$.
Thus, we have
\begin{eqnarray}
p_{\rm L}(\{\tfrac{1}{n},\rho_{i}^{(L)}\}_{i=0}^{n-1})
&\leqslant&p_{\rm L}(\mathcal{E}^{(L)})+\frac{1}{2}\sum_{i=0}^{n-1}\Big|\frac{1}{n}-\eta_{i}^{(L)}\Big|\nonumber\\
&\leqslant&q_{\rm G}(\mathcal{E}^{(L)})+\frac{1}{2}\sum_{i=0}^{n-1}\Big|\frac{1}{n}-\eta_{i}^{(L)}\Big|\nonumber\\
&\leqslant&\frac{1}{n}+\frac{n-1}{n}\big[n q_{\rm G}(\mathcal{E})-1\big]^{L}+\frac{1}{2}\sum_{i=0}^{n-1}\Big|\frac{1}{n}-\eta_{i}^{(L)}\Big|,
\label{eq:tinq}
\end{eqnarray}
where the first inequality is due to inequalities~\eqref{eq:ppd} and \eqref{eq:phe}, 
the second inequality is from inequality~\eqref{eq:upb}, 
and the last inequality is from Theorem~\ref{thm:lqge}.
%%%%%%%%%%%%%%%%%%%%%%

%%%%%%%%%%%%%%%%%%%%%%
To obtain an upper bound of $\frac{1}{2}\sum_{i=0}^{n-1}|\frac{1}{n}-\eta_{i}^{(L)}|$, let us first consider
\begin{equation}\label{eq:minhi}
\max\{\eta_{0}^{(L)},\ldots,\eta_{n-1}^{(L)}\}
\leqslant p_{\rm L}(\mathcal{E}^{(L)})\leqslant q_{\rm G}(\mathcal{E}^{(L)})
\leqslant\frac{1}{n}+\frac{n-1}{n}\big[n q_{\rm G}(\mathcal{E})-1\big]^{L},
\end{equation}
where the first inequality is from inequality~\eqref{eq:inq},
the second inequality is due to inequality~\eqref{eq:upb},
and the last inequality is from Theorem~\ref{thm:lqge}.
Moreover, we have
\begin{equation}\label{eq:maxhi}
\min\{\eta_{0}^{(L)},\ldots,\eta_{n-1}^{(L)}\}
\geqslant1-(n-1)\max\{\eta_{0}^{(L)},\ldots,\eta_{n-1}^{(L)}\}
\geqslant\frac{1}{n}-\frac{(n-1)^{2}}{n}\big[n q_{\rm G}(\mathcal{E})-1\big]^{L},
\end{equation}
where the first inequality is because $\{\eta_{i}^{(L)}\}_{i=0}^{n-1}$ is a probability distribution, and the second inequality follows from Eq.~\eqref{eq:minhi}.
%%%%%%%%%%%%%%%%%%%%%%

%%%%%%%%%%%%%%%%%%%%%%
From inequalities \eqref{eq:minhi} and \eqref{eq:maxhi}, we have
\begin{equation}\label{eq:abel}
\Big|\frac{1}{n}-\eta_{i}^{(L)}\Big|\leqslant\frac{(n-1)^{2}}{n}\big[n q_{\rm G}(\mathcal{E})-1\big]^{L}
\end{equation}
for each $i=0,\ldots,n-1$, which implies
\begin{equation}\label{eq:ltub}
\frac{1}{2}\sum_{i=0}^{n-1}\Big|\frac{1}{n}-\eta_{i}^{(L)}\Big|\leqslant
\frac{(n-1)^{2}}{2}\big[n q_{\rm G}(\mathcal{E})-1\big]^{L}.
\end{equation}
Inequalities~\eqref{eq:tinq} and \eqref{eq:ltub} lead us to inequality~\eqref{eq:atie}.
%%%%%%%%%%%%%%%%%%%%%%
\end{proof}
%%%%%%%%%%%%%%%%%%%%%%

%%%%%%%%%%%%%%%%%%%%%%
%     References     %
%%%%%%%%%%%%%%%%%%%%%%

%%%%%%%%%%%%%%%%%%%%%%
\end{document}